\documentclass[pdflatex,sn-mathphys-num]{sn-jnl}

\usepackage{graphicx}%
\usepackage{multirow}%
\usepackage{amsmath,amssymb,amsfonts}%
\usepackage{amsthm}%
\usepackage{mathrsfs}%
\usepackage[title]{appendix}%
\usepackage{xcolor}%
\usepackage{textcomp}%
\usepackage{manyfoot}%
\usepackage{booktabs}%
\usepackage{algorithm}%
\usepackage{algorithmicx}%
\usepackage{algpseudocode}%
\usepackage{listings}%

\usepackage{hyperref}          
\usepackage[nameinlink,capitalize]{cleveref}  

\theoremstyle{thmstyleone}%
\newtheorem{theorem}{Theorem}[section]
\newtheorem{proposition}{Proposition}[section] 
\newtheorem{corollary}{Corollary}[section]
\newtheorem{lemma}{Lemma}[section] 

\theoremstyle{thmstyletwo}
\newtheorem{example}{Example}[section]
\newtheorem{remark}{Remark}[section]

\theoremstyle{thmstylethree}
\newtheorem{definition}{Definition}[section]  
  
\crefname{theorem}{Theorem}{Theorems}
\crefname{lemma}{Lemma}{Lemmas} 
\crefname{proposition}{Proposition}{Propositions}
\crefname{corollary}{Corollary}{Corollaries}
\crefname{definition}{Definition}{Definitions}
\crefname{remark}{Remark}{Remarks}
\crefname{example}{Example}{Examples}

\newcommand{\va}{\boldsymbol{\alpha}}
\newcommand{\vc}{\boldsymbol{c}}
\newcommand{\ve}{\boldsymbol{e}}
\newcommand{\vu}{\boldsymbol{u}}
\newcommand{\vv}{\boldsymbol{v}}
\newcommand{\vx}{\boldsymbol{x}}
\newcommand{\vy}{\boldsymbol{y}}
\newcommand{\T}{\mathrm{T}}

\newcommand{\rk}{\mathrm{rank}}

\newcommand{\f}{\mathbb{F}}
\newcommand{\fq}{\mathbb{F}_{q}}
\newcommand{\fqn}{\mathbb{F}_{q}^{n}}
\newcommand{\fqx}{\mathbb{F}_{q}[x]}
\newcommand{\fqstar}{\mathbb{F}_{q}^{*}}

\usepackage{enumerate}

\usepackage{makecell} 

\allowdisplaybreaks
\begin{document}

\title{Non-GRS type MDS and AMDS codes from extended TGRS codes}


\author[1]{\fnm{Meiying} \sur{Zhang}}\email{zmy0525zmy@163.com}

\author*[1]{\fnm{Shudi} \sur{Yang}}\email{yangshudi@qfnu.edu.cn}

\author[1]{\fnm{Yanbin} \sur{Zheng}}\email{zheng@qfnu.edu.cn}

\affil*[1]{\orgname{School of Mathematical Sciences, Qufu Normal University}, \orgaddress{\city{Shandong}, \postcode{273165}, \country{China}}}


\abstract{Maximum distance separable (MDS) and almost maximum distance separable (AMDS) codes have been widely used in various fields such as communication systems, data storage, and quantum codes because of their algebraic properties and excellent error-correcting capabilities. In this paper, we construct a class of extended twisted  generalized Reed-Solomon (TGRS) codes and determine the necessary and sufficient conditions for these codes to be MDS or AMDS. Additionally, we prove that these codes are not equivalent to generalized Reed-Solomon (GRS) codes. As an application, under certain circumstances, we compute the covering radii and deep holes of these codes.}

\keywords{Extended code, MDS code, almost MDS code, TGRS code, covering radius, deep hole}



\maketitle

\section{Introduction}\label{sec1}

Let $ \fq $ be a finite field with $ q $ elements, where $ q $ is a prime power. Denote $\fqstar = \fq \backslash \{0\}$, and $\fqx$ the polynomial ring over $ \fq $. 
An $ [n,k,d] $ linear code $ C $ over $ \fq $ is a $ k $-dimensional subspace of $ \fqn $ with length $ n $ and minimum distance $ d $.
The dual code of $ C $, denoted by $ C^\perp $, consists of all vectors $ \vx \in \fqn $ such that $ \vx \cdot \vc^\T = 0 $ for all $ \vc \in C $.
For an $ [n,k,d] $ linear code $ C $, if $ d = n - k + 1 $, i.e., the Singleton bound is attained, then the code $ C $ is named a maximum distance separable (in short, MDS) code; 
if $ d = n - k $, then the code $ C $ is referred to as almost MDS (in short, AMDS).
The code $ C $ is called near MDS (in short, NMDS) if both $ C $ and $ C^{\perp} $ are AMDS.

Since MDS codes and AMDS codes are of great importance in coding theory and applications \cite{ CadambeHL2011, DingT2020, FECC2003, TTECC1977, SakakibaraT2013, SimosV2018, XuCQ2022},
the study of these codes has attracted significant attention \cite{Boer1996, Ballico2001, DodunekovaDK1997, Faldum1997, GengYZZ2022, KaiZL2015}. 
It is well-known that Reed-Solomon (in short, RS) codes and generalized Reed-Solomon (in short, GRS) codes are two classes of MDS codes, but twisted generalized Reed-Solomon (in short, TGRS) codes are not necessarily MDS.
Therefore, in recent years, many researchers have focused on constructing MDS or AMDS codes based on TGRS codes or extended TGRS (in short, ETGRS) codes \cite{GuZ2023, HuangYNX212195, LiZS2025, SuiYLH2022, SuiZS221937, YangWW2025, ZhuL24102395, ZhuL2024, ZhangZT221649}.

Furthermore, non-GRS type MDS codes and AMDS codes have also garnered substantial due to their ability to resist Wieschebrink and Sidelnikov-Shestakov attacks \cite{BeelenBPR2018, LavauzelleR2020}.
Roth and Lempel \cite{RL1989} proposed the first construction of non-GRS MDS codes by adding two columns to the generator matrices of GRS codes.
Han et~al.\ \cite{HanF2023} derived a necessary and sufficient condition for Roth-Lempel codes and its dual codes to be AMDS. 
In addition, Beelen et~al.\ \cite{BeelenPN2017} first introduced the concept of TGRS codes in 2017, which can be viewed as a generalization of GRS codes. 
They established a sufficient and necessary condition for the TGRS codes to be MDS and showed that most of these TGRS codes are non-GRS type.
Based on this work, the properties of TGRS codes, including MDS, AMDS, non-GRS type, self-dual, and self-orthogonal, have been extensively investigated in \cite{HuangYN23221,HuangYNX212195,  LiuL212051,SuiZS221937}. 
Following this research direction, Zhu et~al.\ \cite{ZhuL2024} focused on a class of ETGRS codes by adding a column to the generator matrices of TGRS codes. They provided a necessary and sufficient condition for ETGRS codes to be MDS or AMDS, 
determined their weight distribution based on the subset sum problem, and proved that these codes are neither GRS nor EGRS codes.
By adding a column, which is different from the one used in \cite{ZhuL2024}, to the generator matrix of the TGRS code, Zhang et~al.\ \cite{ZhangD2026} constructed another class of ETGRS codes and developed several classes of self-orthogonal MDS codes, 
which are also non-GRS type. 
Subsequently, by adding different columns to the generator matrix of the same TGRS code, Li et~al.\ 
\cite{LiZS2025} constructed two types of ETGRS codes and then studied their non-GRS type MDS properties and further directly derived some results on the existence of error-correcting pairs of non-GRS type MDS codes.
There are two other fundamental and crucial objects in the study regarding non-GRS type MDS codes and AMDS codes, namely, covering radii and deep holes, as they have important applications in many aspects \cite{CohenKMS1985, ElimelechFS2021}.
For more details on this topic, the reader is referred to \cite{Bartoli2015,Dur1994,KetiW2016,LiZS2025,WuDC2025,ZhuangCL2016,ZhangWK} and the references therein.

Inspired by above works, this paper mainly focuses on a class of linear codes defined in \cref{def: C}.
The structure of this paper is as follows. 
\cref{sec2} reviews some basic notations and known results. 
\cref{sec3} proves that the constructed codes are non-GRS type. 
\cref{sec4} provides the necessary and sufficient conditions for the codes to be MDS or AMDS, along with concrete examples of non-GRS type MDS codes and AMDS codes.
As an application, we present the results of the covering radii and deep holes of our codes in \cref{sec5}. 
Finally, we conclude this paper in \cref{sec6}.
\section{Preliminaries}\label{sec2}

In this section, we review several definitions and known results related to linear codes.  
Now, we explain some notations used in this paper:
\begin{itemize}
\item For a positive integer $ n $, $ [n] $ denotes the set $ \{1, 2, \dots, n \} $.
\item For any set $ I \subseteq [n] $, $ |I| $ denotes the size of $ I $.
\item For any square matrix $ B $, denote by $ \det(B) $ the determinant of $ B $.
\item For any matrix $ A $, $ \rk(A) $ and $A^{\T}$ denote the rank and transpose of $A$, respectively.
\item For any $ \vx, \vy \in \fqn $, $ wt(\vx) $ denotes the (Hamming) weight of $ \vx $; $ d(\vx, \vy) $ denotes the (Hamming) distance betweem $\vx$ and $\vy$.
$ \dim(C) $ and $ d(C) $ denote the dimension and the minimum distance of linear code $ C $, respectively.
\item $ \vv = \boldsymbol{1} $ denotes the all-ones vector $ (1, 1, \dots, 1) $ with length $ n $ or $ n+2 $ depending on the context.
\item For any  $ a \in \fq $ and $ \va = (\alpha_1, \dots, \alpha_n) \in \fqn $, denote $ (\va, a)  = (\alpha_1, \dots, \alpha_n, a) $.
\item The $ \fq $-linear subspace generated by the subset $ S $ of $ \fqn $ is denoted by $ \langle S \rangle $.
\end{itemize}

\subsection{Basic definitions of linear codes}

First, we recall the definition of GRS codes.
\begin{definition}[{\cite[Page 176]{FECC2003}}]\label{def:GRS}
Let $ k $ and $ n $ be integers with $ 1 \leqslant k \leqslant n \leqslant q $, $ \va =  (\alpha_1 $, $ \alpha_2 $, \dots, $ \alpha_n) \in \fqn $ with $ \alpha_i \ne \alpha_j (i \ne j)$, and $ v_1 $, $ v_2 $, \dots, $ v_n \in \fqstar $. Then the GRS code with length $ n $ and dimension $ k $ is defined as 
\[
\mathcal{GRS}_{k,n}(\va,\vv) = \{(v_1 f(\alpha_1), \dots, v_n f(\alpha_n)) \colon f(x) \in \fqx, \deg f(x) \leqslant k-1 \}.
\]
If $ \vv = \boldsymbol{1} $, the GRS code reduces to an RS code.
\end{definition}

Next, we define the linear space $ \mathcal{V}_{k, t, h, \eta} $ of the twisted polynomials.
\begin{definition}[{\cite[Definition 5]{BeelenPN2017}}]
Let $ k, t $, $h $ be integers with $ 0 \leqslant h<k \leqslant q $ and let $ \eta \in \fqstar $. Then the set of $ (k, t, h, \eta) $-twisted polynomials is defined as 
\[
\mathcal{V}_{k, t, h, \eta} = \{ f(x) = \sum_{i=0}^{k-1} a_i x^i + \eta a_h x^{k-1+t} \colon a_i \in \fq (i = 0, \dots, k-1)\},
\]
where $ h $ and $ t $ are the hook and twist, respectively.
\end{definition}

Based on the linear space $ \mathcal{V}_{k, t, h, \eta} $ of the twisted polynomials, we define TGRS codes as follows.
\begin{definition}[{\cite[Definition 6]{BeelenPN2017}}]
For any integers $ k, t, h, n $ with $0 \leqslant h \leqslant k-1 < k - 1 + t < n \leqslant q $, let $ \eta \in \fqstar $, $ \va = (\alpha_1, \dots, \alpha_n) \in \fqn$ with $ \alpha_i \neq \alpha_j (i \neq j) $ and $ \vv = (v_1, \dots, v_n) \in (\fqstar)^n $. 
Then the TGRS code with length $ n $ and dimension $ k $ is defined as 
\[
\mathcal{C}_{k,n,t,h}(\va, \vv, \eta) = \{(v_1 f(\alpha_1), \dots, v_n f(\alpha_n)) \colon f(x) \in \mathcal{V}_{k, t, h, \eta} \}.
\]
\end{definition}

\subsection{The Schur product}

This subsection reviews the definition of the Schur product and related results.
\begin{definition}
Let $ \vx = (x_1, \dots, x_n) $, $ \vy = (y_1, \dots, y_n) \in \fqn$. Then the Schur product of $ \vx $ and $ \vy $ is defined as 
\[
  \vx * \vy = (x_1 y_1, \dots, x_n y_n).
\]
The Schur product of two $ q $-ary codes $ C_1 $ and $ C_2 $ with length $ n $ is defined as
\[
C_1 * C_2 = \langle \vc_1 * \vc_2 : \vc_1 \in C_1, \vc_2 \in C_2 \rangle.
\]
In particular, for an $ [n,k] $ linear  code $ C $, the Schur product of  $C$ with itself is denoted by $ C^2 $.
\end{definition}

\begin{lemma}[{\cite[Remark 2.2]{ZhuL2024}}]\label{lem: Schurprod_C}
For any linear codes $ C_1 $ and $ C_2 $, if $ C_1 = \langle \boldsymbol{\beta}_1, \dots, \boldsymbol{\beta}_{k_1} \rangle $ and $ C_2 = \langle \boldsymbol{\gamma}_1, \dots, \boldsymbol{\gamma}_{k_2} \rangle $ with $ \boldsymbol{\beta}_i$, $ \boldsymbol{\gamma}_j \in \fqn $ for $  i = 1, \dots, k_1$ and $ j = 1, \dots, k_2 $, then 
\[
C_1 * C_2 = \langle \boldsymbol{\beta}_i * \boldsymbol{\gamma}_j :i = 1, \dots, k_1, j = 1, \dots, k_2 \rangle . 
\]
\end{lemma}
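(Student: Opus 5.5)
The plan is to expand both sides of the claimed equality and show that each contains the other.

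For the inclusion $\supseteq$: every product $\boldsymbol{\beta}_i * \boldsymbol{\gamma}_j$ is of the form $\vc_1 * \vc_2$ with $\vc_1 = \boldsymbol{\beta}_i \in C_1$ and $\vc_2 = \boldsymbol{\gamma}_j \in C_2$. Such products belong to the generating set of $C_1 * C_2$. Since $C_1 * C_2$ is an $\fq$-linear subspace, it therefore contains the span $\langle \boldsymbol{\beta}_i * \boldsymbol{\gamma}_j \rangle$.

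For the inclusion $\subseteq$, I will use the bilinearity of the Schur product. Directly from the coordinatewise definition, $(a\vx + b\vy) * \vu = a(\vx * \vu) + b(\vy * \vu)$, and the analogous identity holds in the second argument. Now take arbitrary $\vc_1 \in C_1$ and $\vc_2 \in C_2$ and write
\[
\vc_1 = \sum_{i=1}^{k_1} a_i \boldsymbol{\beta}_i, \qquad \vc_2 = \sum_{j=1}^{k_2} b_j \boldsymbol{\gamma}_j, \qquad a_i, b_j \in \fq .
\]
Expanding by bilinearity gives
\[
\vc_1 * \vc_2 = \sum_{i=1}^{k_1}\sum_{j=1}^{k_2} a_i b_j\, (\boldsymbol{\beta}_i * \boldsymbol{\gamma}_j).
\]
This shows that every generator $\vc_1 * \vc_2$ of $C_1 * C_2$ lies in $\langle \boldsymbol{\beta}_i * \boldsymbol{\gamma}_j \rangle$. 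That span is a subspace, so it contains the whole span of the generators, namely $C_1 * C_2$.

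There is no real obstacle. The only point needing care is that $C_1 * C_2$ is defined as the span of all products $\vc_1 * \vc_2$, not merely the set of such products. Both inclusions must therefore be checked at the level of spans, and the argument above does exactly that.
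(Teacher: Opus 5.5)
Your proof is correct and complete: the inclusion $\supseteq$ holds because each $\boldsymbol{\beta}_i * \boldsymbol{\gamma}_j$ is itself a generator of $C_1 * C_2$, and $\subseteq$ follows from bilinearity, with both inclusions rightly checked at the level of spans. The paper gives no proof of its own for this lemma (it is quoted from Zhu et al.), and your two-inclusion bilinearity argument is the standard justification for it.
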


The following lemma describes the dual of a GRS code.
\begin{lemma}[{\cite[Lemma 2.3(i)]{JinX17}}]\label{lem: GRS1}
Let $ \vu = (u_1, \dots, u_n) $ with $ u_i = \prod\limits_{j = 1, j \ne i}^{n} (\alpha_i - \alpha_j)^{-1} $ for $i=1,\ldots,n $. Then
\[
\mathcal{GRS}_{k, n}^{\perp} (\va, \boldsymbol{1}) = \mathcal{GRS}_{n-k, n} (\va, \vu).
\]
\end{lemma}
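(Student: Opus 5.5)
The statement is the standard description of the dual of an RS code $\mathcal{GRS}_{k,n}(\va,\boldsymbol{1})$ as a GRS code with column multipliers $u_i=\prod_{j\ne i}(\alpha_i-\alpha_j)^{-1}$. I will prove it by a dimension count together with an orthogonality check.

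First, since the $\alpha_i$ are distinct and $u_i\ne 0$, both $\mathcal{GRS}_{k,n}(\va,\boldsymbol{1})$ and $\mathcal{GRS}_{n-k,n}(\va,\vu)$ are evaluation codes whose generator matrices contain Vandermonde blocks of full rank. Hence they have dimensions $k$ and $n-k$ respectively. Consequently it suffices to show that every codeword of the second code is orthogonal to every codeword of the first: then $\mathcal{GRS}_{n-k,n}(\va,\vu)\subseteq \mathcal{GRS}_{k,n}^{\perp}(\va,\boldsymbol{1})$, and equality follows because both sides have dimension $n-k$.

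For the orthogonality, take $f$ with $\deg f\le k-1$ and $g$ with $\deg g\le n-k-1$, and set $h=fg$, so $\deg h\le n-2$. The inner product equals $\sum_{i=1}^n u_i h(\alpha_i)$. The key step is the identity
\[
\sum_{i=1}^n \frac{h(\alpha_i)}{\prod_{j\ne i}(\alpha_i-\alpha_j)}=0 \quad\text{for all } h \text{ with } \deg h\le n-2.
\]
To prove it, I will use Lagrange interpolation: since $\deg h\le n-1$,
\[
h(x)=\sum_{i=1}^n h(\alpha_i)\prod_{j\ne i}\frac{x-\alpha_j}{\alpha_i-\alpha_j},
\]
and the coefficient of $x^{n-1}$ on the right is exactly $\sum_i u_i h(\alpha_i)$. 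On the left this coefficient is $0$ because $\deg h\le n-2$. This gives the identity, and with it the lemma.

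There is no real obstacle here. The only points needing care are the degree bookkeeping ($\deg(fg)\le (k-1)+(n-k-1)=n-2$) and confirming that $u_i\ne 0$, so that the second code is a genuine GRS code of dimension $n-k$.
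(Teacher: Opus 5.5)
Your proof is correct and is the standard argument. The paper gives no proof of its own and simply cites Jin--Xing, where the result rests on the same two ingredients you use: the dimension count, and the vanishing $\sum_{i=1}^n u_i\alpha_i^{\ell}=0$ for $0\leqslant \ell\leqslant n-2$. The paper also records this vanishing as \cref{lem: (Van-M)u=1}, and your Lagrange-interpolation step proves it. The only implicit hypothesis is $k\leqslant n-1$, which is needed so that $\mathcal{GRS}_{n-k,n}(\va,\vu)$ is a genuine GRS code under \cref{def:GRS}.
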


By Definitions \cref{def:GRS} and \cref{lem: Schurprod_C}, the Schur square of a GRS code is also a GRS code, as stated below. 

\begin{lemma}[{\cite[Proposition 10]{MarquezMP13}}]\label{lem: GRS2}
Let $ \vu = (u_1, \dots, u_n) $ with $ u_i = \prod\limits_{j = 1, j \ne i}^{n} (\alpha_i - \alpha_j)^{-1} $ for $i=1,\ldots,n $. 
If $ 2 \leqslant k \leqslant \frac{n+1}{2} $, then
\[
\mathcal{GRS}_{k, n}^{2} (\va, \vu) = \mathcal{GRS}_{2k-1, n} (\va, \vu^2),
\]
where $\vu^2= (u_1^2, \dots, u_n^2)$ for simplicity.
\end{lemma}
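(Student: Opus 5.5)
The statement to prove is \cref{lem: GRS2}: for $2\leqslant k\leqslant \frac{n+1}{2}$, $\mathcal{GRS}_{k,n}^{2}(\va,\vu)=\mathcal{GRS}_{2k-1,n}(\va,\vu^2)$. The plan is to show both inclusions, using the generators from \cref{def:GRS} and \cref{lem: Schurprod_C}, and then to match dimensions.

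First, $\mathcal{GRS}_{k,n}(\va,\vu)$ is spanned by the vectors
\[
\boldsymbol{\beta}_i=(u_1\alpha_1^{i},\dots,u_n\alpha_n^{i}),\qquad 0\leqslant i\leqslant k-1 .
\]
By \cref{lem: Schurprod_C}, the Schur square is spanned by
\[
\boldsymbol{\beta}_i*\boldsymbol{\beta}_j=(u_1^2\alpha_1^{i+j},\dots,u_n^2\alpha_n^{i+j}),\qquad 0\leqslant i,j\leqslant k-1 .
\]
Each such vector is the evaluation vector, with weights $\vu^2$, of the monomial $x^{i+j}$, and $i+j\leqslant 2k-2$. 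Every monomial $x^m$ with $0\leqslant m\leqslant 2k-2$ occurs this way, for instance with $i=\min(m,k-1)$ and $j=m-i$. Hence the spanning set of the Schur square is exactly the monomial basis of the polynomials of degree at most $2k-2$, evaluated with weights $\vu^2$. This gives
\[
\mathcal{GRS}_{k,n}^{2}(\va,\vu)=\{(u_1^2 f(\alpha_1),\dots,u_n^2 f(\alpha_n))\colon \deg f\leqslant 2k-2\}.
\]
Since both sides have the same spanning set, the argument already yields equality of spans, not merely an inclusion.

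Second, we must check that the right-hand side really is $\mathcal{GRS}_{2k-1,n}(\va,\vu^2)$ in the sense of \cref{def:GRS}. This requires $1\leqslant 2k-1\leqslant n$, which holds exactly because $k\leqslant \frac{n+1}{2}$. It also requires the weights $u_i^2$ to be nonzero, which holds because the $\alpha_i$ are pairwise distinct, so each $u_i\neq 0$.

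Finally, as a consistency check, the code has dimension $2k-1$. Evaluation of polynomials of degree at most $2k-2<n$ at $n$ distinct points is injective, since a nonzero polynomial of degree less than $n$ cannot vanish at $n$ points.

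There is essentially no obstacle. The only point needing care is the role of the hypothesis $k\leqslant \frac{n+1}{2}$. Without it, $2k-1>n$, and the Schur square would be all of $\fqn$ rather than a GRS code of dimension $2k-1$.
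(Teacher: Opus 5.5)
Your proof is correct and follows the same route the paper indicates. The paper gives no proof of its own: it cites \cite[Proposition 10]{MarquezMP13} and remarks that the result follows from \cref{def:GRS} and \cref{lem: Schurprod_C}, which is exactly your argument (products of weighted monomials give all weighted monomials $x^m$ with $0\leqslant m\leqslant 2k-2$, and $k\leqslant\frac{n+1}{2}$ ensures $2k-1\leqslant n$, so the right-hand side is a genuine GRS code).
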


Combining Lemmas \cref{lem: GRS1} and \cref{lem: GRS2}, we obtain the following result.
\begin{proposition}\label{prop: GRS^2}
Let $ \vu = (u_1, \dots, u_n) $ with $ u_i = \prod\limits_{j = 1, j \ne i}^{n} (\alpha_i - \alpha_j)^{-1} $ for $i=1,\ldots,n $. 
If $ \frac{n-1}{2} \leqslant k \leqslant n-2 $, then
\[
(\mathcal{GRS}_{k, n}^{\perp} (\va, \boldsymbol{1}))^2 = \mathcal{GRS}_{2n-2k-1, n} (\va, \vu^2).
\]
\end{proposition}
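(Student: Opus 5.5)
The plan is to chain \cref{lem: GRS1} and \cref{lem: GRS2}, after checking that the parameters of the dual code fall in the range where \cref{lem: GRS2} applies.

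\textbf{Step 1 (identify the dual).} By \cref{lem: GRS1}, $\mathcal{GRS}_{k,n}^{\perp}(\va,\boldsymbol{1}) = \mathcal{GRS}_{n-k,n}(\va,\vu)$, with $\vu$ exactly as in the statement. Hence
\[
(\mathcal{GRS}_{k,n}^{\perp}(\va,\boldsymbol{1}))^2 = \mathcal{GRS}_{k',n}^{2}(\va,\vu), \qquad k' = n-k .
\]

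\textbf{Step 2 (check the hypotheses of \cref{lem: GRS2}).} We need $2 \leqslant k' \leqslant \frac{n+1}{2}$. The assumption $k \leqslant n-2$ gives $k' = n-k \geqslant 2$. The assumption $k \geqslant \frac{n-1}{2}$ gives
\[
k' = n-k \leqslant n - \frac{n-1}{2} = \frac{n+1}{2}.
\]
So \cref{lem: GRS2} applies with dimension $k'$ and multiplier $\vu$, and yields
\[
\mathcal{GRS}_{k',n}^{2}(\va,\vu) = \mathcal{GRS}_{2k'-1,n}(\va,\vu^2) = \mathcal{GRS}_{2n-2k-1,n}(\va,\vu^2).
\]
Combining this with Step 1 gives the claim.

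\textbf{Main obstacle.} Only the parameter translation in Step 2 needs any care, and it reduces to the two inequalities shown there. One point must match: \cref{lem: GRS2} must hold for the specific multiplier $\vu$, or more generally for any nonzero multiplier. This is the case, since squaring componentwise sends $\vu$ to $\vu^2$ and the evaluation part is the standard identity $\{f g : \deg f,\deg g \leqslant k'-1\}$ spanning all polynomials of degree at most $2k'-2$. The condition $2k'-1 \leqslant n$ keeps the resulting code a genuine GRS code of that dimension.
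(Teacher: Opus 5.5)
Your proposal is correct and is the paper's own argument: the paper obtains this proposition directly by combining \cref{lem: GRS1} (the dual is $\mathcal{GRS}_{n-k,n}(\va,\vu)$) with \cref{lem: GRS2}. Your Step 2 spells out the range check $2 \leqslant n-k \leqslant \frac{n+1}{2}$, which the paper leaves implicit.
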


\subsection{A kind of extended linear codes}

In \cite{SunDC2024}, Sun et~al.\ provided a general description of extended codes. 
Let $\ve = (e_1, e_2, \dots, e_n) \in \fqn$ be any nonzero vector. 
Any given $ [n, k, d] $ code $ C $ over $ \fq $ can be extended to an $ [n+1, k, \bar{d}] $ code $ \overline{C} (\ve) $ over $ \fq $ as follows: 
\begin{align}\label{eq: bar_C}
\overline{C} (\ve) = \{(c_1, \dots, c_n, c_{n+1}) \colon (c_1, \dots, c_n) \in C, c_{n+1} = \sum_{i = 1}^{n} e_i c_i \},
\end{align}
where $ \bar{d} = d $ or $ \bar{d} = d + 1 $.

All extended codes discussed in this paper follow the definition in \eqref{eq: bar_C}.

The following lemma describes the relationship between the generator (and parity-check) matrices of a code and its extended code.

\begin{lemma}[{\cite[Page 13]{SunDC2024}}]\label{lem: bar_C}
Let $ C $ be an $ [n, k, d] $ linear code over $ \fq $ and let $ \ve \in \fqn$. 
If $ C $ has a generator matrix $ G $ and a parity-check matrix $ H $, then the generator matrix and parity-check matrix for the extended code $ \overline{C} (\ve) $ defined in \eqref{eq: bar_C} are $ (G, G \ve^\T) $ and
\[
  \begin{pmatrix}
    H & 0^\T  \\
    \ve & -1
  \end{pmatrix},
\]
respectively, where $ \ve^\T$ denotes the transpose of $ \ve $. 
\end{lemma}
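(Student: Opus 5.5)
The lemma concerns the code $\overline{C}(\ve)$ in \eqref{eq: bar_C}. The plan is to verify both matrices directly from this definition, together with the standard correspondence between generator and parity-check matrices.

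\textbf{Generator matrix.} Let $G$ be a $k\times n$ generator matrix of $C$. Every codeword of $C$ has the form $\vc=\vx G$ with $\vx\in\fq^{k}$. By \eqref{eq: bar_C}, the extended coordinate is
\[
c_{n+1}=\sum_{i=1}^{n} e_i c_i=\vc\,\ve^\T=\vx G\ve^\T .
\]
Hence every codeword of $\overline{C}(\ve)$ equals $\vx\,(G, G\ve^\T)$. Conversely, every such vector lies in $\overline{C}(\ve)$, since its first $n$ coordinates form $\vx G\in C$ and its last coordinate is $\sum_i e_i c_i$. It remains to check that $(G,G\ve^\T)$ has rank $k$, and this holds because it contains $G$ as a submatrix. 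So $(G,G\ve^\T)$ generates $\overline{C}(\ve)$, and $\dim\overline{C}(\ve)=k$.

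\textbf{Parity-check matrix.} Let $H$ be an $(n-k)\times n$ parity-check matrix of $C$, and set
\[
H'=\begin{pmatrix} H & 0^\T\\ \ve & -1\end{pmatrix}.
\]
Two facts are needed.

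First, $(G,G\ve^\T)H'^\T=0$. The product splits into two blocks: $GH^\T+G\ve^\T\cdot 0=0$, and $G\ve^\T+G\ve^\T(-1)=0$.

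Second, $\rk(H')=n-k+1$. The last column of $H'$ is zero in the first $n-k$ rows and equals $-1\neq 0$ in the last row. So the last row is not a combination of the first $n-k$ rows, which are themselves independent because $\rk(H)=n-k$.

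Since $(n+1)-k=n-k+1$, the row space of $H'$ has exactly the dimension of $\overline{C}(\ve)^\perp$. It is contained in $\overline{C}(\ve)^\perp$ by the first fact, so the two spaces are equal, and $H'$ is a parity-check matrix of $\overline{C}(\ve)$.

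The argument is routine; the only point needing care is the rank count for $H'$, which the $-1$ entry settles immediately.
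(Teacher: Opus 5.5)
Your proof is correct and complete. The paper gives no proof of this lemma and simply quotes it from \cite{SunDC2024}, so there is no in-paper argument to compare against. Your verification is the standard direct one: write every codeword as $\vx(G,G\ve^\T)$, check $(G,G\ve^\T)H'^\T=0$, and get $\rk(H')=n-k+1$ from the $-1$ entry.

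As a small shortcut, you can avoid the dimension count for the parity-check part. For $\vy=(\vc,c_{n+1})$, $H'\vy^\T=0$ holds exactly when $H\vc^\T=0$ and $c_{n+1}=\vc\ve^\T$. That is literally the defining condition of $\overline{C}(\ve)$ in \eqref{eq: bar_C}.
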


Recall that a monomial matrix is a square matrix with exactly one nonzero entry in each row and each column {\cite[Section 1.7]{FECC2003}}.

\begin{definition}\label{def: Mon-Equ}
Let $ P $ and $ Q $ be two linear codes of the same length over $ \fq $, 
and let $ G $ be a generator matrix of $ P $. 
Then $ P $ and $ Q $ are monomially equivalent if there is a monomial matrix $ M $ 
such that $ GM  $ is a generator matrix of $ Q $. 
\end{definition}

The following lemma establishes the relationship between the non-GRS type property of an extended code and its original code.

\begin{lemma}\label{lem: barC_GRS_C}
If the extended code $ \overline{C}(\ve) $ is monomially equivalent to a GRS code, 
then the original code $ C $ is also monomially equivalent to a GRS code.
\end{lemma}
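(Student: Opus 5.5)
The idea is that $C$ is a punctured code of $\overline{C}(\ve)$, and that puncturing a GRS code at one coordinate gives a GRS code again. By \eqref{eq: bar_C}, deleting the last coordinate of every codeword of $\overline{C}(\ve)$ returns exactly $C$. Indeed, the map $(c_1,\dots,c_n)\mapsto(c_1,\dots,c_n,\sum e_ic_i)$ is a bijection from $C$ onto $\overline{C}(\ve)$, and its inverse is the projection onto the first $n$ coordinates. In matrix form: if $G$ generates $C$, then by \cref{lem: bar_C} the matrix $\overline{G}=(G, G\ve^\T)$ generates $\overline{C}(\ve)$, and removing its last column gives back $G$.

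Now suppose $\overline{C}(\ve)$ is monomially equivalent to a GRS code. By \cref{def: Mon-Equ}, there is a monomial $(n+1)\times(n+1)$ matrix $M$ such that $\overline{G}M$ generates some $\mathcal{GRS}_{k,n+1}(\va',\vv')$ with distinct $\alpha'_1,\dots,\alpha'_{n+1}$ and nonzero $v'_j$. Write $M=PD$, with $P$ a permutation matrix and $D$ an invertible diagonal matrix. The last column of $\overline{G}$ is sent to some column position $j_0$, multiplied by the nonzero scalar $d_{j_0}$. Let $M'$ be the $n\times n$ matrix obtained from $M$ by deleting row $n+1$ and column $j_0$. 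Since row $n+1$ of $M$ has its only nonzero entry in column $j_0$, the matrix $M'$ is still monomial. Also, $GM'$ equals $\overline{G}M$ with column $j_0$ deleted, because the last column of $\overline{G}$ contributes only to column $j_0$.

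It remains to check that deleting coordinate $j_0$ from $\mathcal{GRS}_{k,n+1}(\va',\vv')$ gives $\mathcal{GRS}_{k,n}(\va'',\vv'')$, where $\va''$ and $\vv''$ are $\va'$ and $\vv'$ with their $j_0$-th entries removed. This is immediate from \cref{def:GRS}: the codewords $(v'_jf(\alpha'_j))_j$ with $\deg f\le k-1$ restrict to $(v''_jf(\alpha''_j))_j$. The remaining evaluation points are still distinct, the remaining multipliers are still nonzero, and $k\le n\le q$ holds provided $k\le n$, which is true because $C$ is an $[n,k]$ code. Therefore $GM'$ generates a GRS code, and $C$ is monomially equivalent to it.

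The only point needing care is the monomial-matrix bookkeeping, namely that deleting the right row and column keeps the matrix monomial and matches the puncturing. The GRS part is routine. One small caveat: $GM'$ has rank $k$ because it generates the punctured code, whose dimension equals $\dim \overline{C}(\ve)=k$, so the dimensions are consistent.
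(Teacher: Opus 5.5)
Your proposal is correct. You show that $C$ is exactly $\overline{C}(\ve)$ punctured at its last coordinate, that deleting row $n+1$ and column $j_0$ leaves a monomial matrix $M'$ with $GM'$ equal to $\overline{G}M$ punctured at position $j_0$, and that a GRS code punctured at one coordinate is again a GRS code. The paper states this lemma without proof and implicitly relies on this same puncturing argument, so you have filled in the details it omits, including the monomial-matrix bookkeeping.
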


Throughout this paper, a code is called non-GRS type if it is not monomially equivalent to a GRS code.

\subsection{Covering radius and deep holes}

Recall that a sphere of radius $t$ in $\fqn$ is a set of vectors in $\fqn$ at distance no more than $t$ from a given vector in $\fqn$. 
An $ [n, k, d] $ code $ C $ over $ \fqn $ can correct $ t = \lfloor (d-1)/2 \rfloor $ errors. 
Thus spheres in $ \fqn $ of radius $t$ centered at codewords are pairwise disjoint, but this does not hold for spheres of larger radius.
It is natural to explore the opposite situation of finding the smallest radius, called the covering radius, of spheres centered at codewords that completely cover the space $ \fqn $. 

In general, given a code, it is difficult to find its covering radius, and the complexity of this has been investigated; see \cite{CohenHLL97}.
In this subsection, we will present some basic theories and properties of the covering radius. 
Firstly, we give the following definition of the covering radius.

\begin{definition}\label{def: coveringradius}
The covering radius of a code $ C $, denoted by $ \rho(C) $, 
is defined to be the maximum distance from any vector in $\fqn$ to the nearest codeword in $ C $. 
Equivalent, 
\[
\rho(C) = \max\limits_{\vx \in \fqn} \min\limits_{\vc \in C} d(\vx, \vc).
\]
A deep hole of $C$ is a vector achieving this covering radius.
\end{definition}

\begin{lemma}[Redundancy Bound {\cite[Corollary 11.1.3]{FECC2003}}]\label{lem: RedundancyBound}
Let $ C $ be an $ [n, k] $ code. Then $ \rho(C) \leqslant n - k $.
\end{lemma}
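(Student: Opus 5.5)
The Redundancy Bound (\cref{lem: RedundancyBound}) follows from a counting-free structural argument based on cosets and parity-check syndromes. Let $C$ be an $[n,k]$ code with generator matrix $G$. After a suitable permutation of coordinates, which changes neither the covering radius nor the dimension, we may assume $G=(I_k \mid A)$ is in systematic form. A permutation of coordinates is a Hamming isometry of $\fqn$, so $\rho(C)$ is unchanged, and it therefore suffices to treat the systematic case.

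Take an arbitrary vector $\vx=(x_1,\dots,x_n)\in\fqn$. We construct a codeword agreeing with $\vx$ on the first $k$ coordinates. Put $\vc=(x_1,\dots,x_k)G$. Because $G$ is systematic, $\vc$ has $c_i=x_i$ for $i=1,\dots,k$. Hence $\vx-\vc$ is supported on the last $n-k$ coordinates, and
\[
\min_{\vc'\in C} d(\vx,\vc')\leqslant d(\vx,\vc)=wt(\vx-\vc)\leqslant n-k .
\]
Since $\vx$ was arbitrary, taking the maximum over $\vx$ in \cref{def: coveringradius} gives $\rho(C)\leqslant n-k$.

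The only point needing care is the reduction to systematic form. A generator matrix of rank $k$ has $k$ linearly independent columns. Row operations, which do not change $C$, turn those columns into an identity matrix. A coordinate permutation then moves them to the front. This permutation is a monomial equivalence in the sense of \cref{def: Mon-Equ}, and it preserves Hamming distance.

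One can also skip the permutation entirely. Choose an information set $I\subseteq[n]$ with $|I|=k$. Projection of $C$ onto the coordinates in $I$ is then bijective onto $\fq^{k}$, so there is a codeword $\vc$ agreeing with $\vx$ on $I$. The difference $\vx-\vc$ is then supported on $[n]\setminus I$, which has $n-k$ elements, giving the same bound. I expect no real obstacle here; the whole content is the existence of an information set of size $k$.
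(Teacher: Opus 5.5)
Your argument is correct. The paper gives no proof of its own and only cites Huffman--Pless, where the Redundancy Bound is derived on the parity-check side. There one first shows that $\rho(C)$ is the least $s$ such that every syndrome in $\mathbb{F}_q^{\,n-k}$ is a linear combination of at most $s$ columns of a parity-check matrix $H$; equivalently, every coset of $C$ contains a vector of weight at most $s$. One then notes that $H$ has rank $n-k$, so $n-k$ linearly independent columns already span all syndromes.

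You work on the generator side instead: you choose an information set $I$ and take the codeword that agrees with $\vx$ on $I$. The two arguments are dual forms of the same fact, since $I$ is an information set of $C$ exactly when the columns of $H$ indexed by $[n]\setminus I$ are linearly independent. Your route is self-contained, needs no coset or syndrome machinery, and exhibits the nearby codeword explicitly. The textbook route is a one-line corollary of the syndrome characterization of $\rho(C)$, and that characterization is the more useful tool when one wants the exact covering radius rather than only an upper bound.

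One cosmetic point: your opening sentence describes the argument as ``based on cosets and parity-check syndromes'', but what you actually give uses neither. That description should be changed to match the proof.
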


\begin{lemma}[Supercode Lemma {\cite[Lemma11.1.5]{FECC2003}}]\label{lem: SupercodeLemma}
If $ C $ and $ C' $ are linear codes with $ C \subseteq C' $, then
$ \rho(C) \geqslant \min\{wt(\vc) \colon \vc \in C' \backslash C \} $.    
\end{lemma}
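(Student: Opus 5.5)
Proposed plan. The statement just above is the Supercode Lemma, \cref{lem: SupercodeLemma}: if $C \subseteq C'$ then $\rho(C) \geqslant \min\{wt(\vc) \colon \vc \in C' \backslash C\}$. I will prove it directly from \cref{def: coveringradius} by exhibiting one vector whose distance to every codeword of $C$ is at least the stated minimum. First I dispose of the degenerate case: if $C' = C$, the set $C' \backslash C$ is empty and the minimum is conventionally $+\infty$. The statement is only meaningful when $C \subsetneq C'$, so I assume that from now on.

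Let $\vx \in C' \backslash C$ be a vector attaining the minimum weight $w = \min\{wt(\vc) \colon \vc \in C' \backslash C\}$; such an $\vx$ exists because the set is finite and nonempty. Now take an arbitrary codeword $\vc \in C$. We have $d(\vx,\vc) = wt(\vx - \vc)$, and the key observation is that $\vx - \vc \in C' \backslash C$. It lies in $C'$ because $C'$ is linear and contains both $\vx$ and $\vc$. It does not lie in $C$, since otherwise $\vx = (\vx - \vc) + \vc$ would belong to $C$, contradicting the choice of $\vx$. Therefore $d(\vx,\vc) \geqslant w$ for every $\vc \in C$, which gives $\min_{\vc \in C} d(\vx,\vc) \geqslant w$.

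Taking the maximum over all vectors of $\fqn$ in \cref{def: coveringradius}, I conclude $\rho(C) \geqslant \min_{\vc\in C} d(\vx,\vc) \geqslant w$, which is the claimed inequality. The only step needing care is the coset argument showing that $\vx - \vc$ avoids $C$; it uses linearity of $C$, and closure of $C'$ under subtraction. Beyond that, the proof is routine.
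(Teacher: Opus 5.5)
The paper gives no proof of this lemma; it only cites Huffman--Pless. Your argument is correct and is essentially the standard proof in that source: the coset $\vx + C$ of any $\vx \in C' \backslash C$ lies entirely in $C' \backslash C$, so the distance from $\vx$ to $C$ (the weight of that coset) is at least the minimum, and this bounds $\rho(C)$ from below.

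Your remark that proper containment $C \subsetneq C'$ is implicitly required is apt. Choosing $\vx$ of minimum weight is harmless but unnecessary, since the argument works for any $\vx \in C' \backslash C$.
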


The following lemmas characterize the covering radius and deep holes of a linear code using its generator matrix.
\begin{lemma}[{\cite[Lemma II.7]{ZhangWK}}]\label{lem: coveringradius} 
Let $ G $ denote a generator matrix of an $ [n,k] $ MDS code $ C $. 
Then the covering radius $ \rho(C) = n-k $ if and only if there exists a vector $ \vx \in \fqn $ such that the $ (k+1) \times n $ matrix $ \big( \frac{G}{\vx} \big) $ generates an $ [n,k+1] $ MDS code.
\end{lemma}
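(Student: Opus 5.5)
This is the standard characterization of covering radius via extension by one row, and I would prove the two directions separately. Write $C$ for the $[n,k]$ MDS code generated by $G$, and for $\vx\in\fqn$ let $C_{\vx}=C+\langle \vx\rangle$. This is the code generated by $\big(\frac{G}{\vx}\big)$.

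\emph{Sufficiency.} Assume $\big(\frac{G}{\vx}\big)$ generates an $[n,k+1]$ MDS code. Then $\vx\notin C$, and $C\subseteq C_{\vx}$. Every vector of $C_{\vx}\setminus C$ is a nonzero codeword of an MDS code of dimension $k+1$. Its weight is therefore at least $n-(k+1)+1=n-k$. By the Supercode Lemma (\cref{lem: SupercodeLemma}), $\rho(C)\geqslant n-k$. The Redundancy Bound (\cref{lem: RedundancyBound}) gives $\rho(C)\leqslant n-k$, so $\rho(C)=n-k$.

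\emph{Necessity.} Assume $\rho(C)=n-k$, and choose a deep hole $\vx$, so that $d(\vx,\vc)\geqslant n-k$ for every $\vc\in C$. Since $\rho(C)=n-k\geqslant 1$, $\vx\notin C$, and hence $C_{\vx}$ has dimension $k+1$. We must show every nonzero $\vy\in C_{\vx}$ has weight at least $n-k$. Write $\vy=\lambda\vx+\vc$ with $\vc\in C$.
\begin{enumerate}[(i)]
\item If $\lambda=0$, then $\vy$ is a nonzero codeword of $C$. Its weight is at least $n-k+1$ because $C$ is MDS.
\item If $\lambda\neq 0$, then $wt(\vy)=wt(\vx+\lambda^{-1}\vc)=d(\vx,-\lambda^{-1}\vc)\geqslant n-k$, since $-\lambda^{-1}\vc\in C$.
\end{enumerate}
Hence $d(C_{\vx})\geqslant n-k=n-(k+1)+1$. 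The Singleton bound gives the reverse inequality, so $C_{\vx}$ is an $[n,k+1]$ MDS code, generated by $\big(\frac{G}{\vx}\big)$.

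The only delicate point is the scaling step (ii): a coset vector $\lambda\vx+\vc$ has the same weight as $\vx+\lambda^{-1}\vc$, which reduces the claim to the deep-hole property of $\vx$. One also needs $\vx\notin C$, so that the matrix has full rank $k+1$; this holds because $n-k\geqslant 1$. Everything else follows directly from the two cited bounds.
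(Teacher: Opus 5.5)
Your proof is correct and is essentially the argument the paper relies on. The paper only cites this lemma, but for its AMDS analogue (\cref{lem: AMDS-CR}) it invokes exactly your ingredients: the Supercode Lemma (\cref{lem: SupercodeLemma}) and the Redundancy Bound (\cref{lem: RedundancyBound}) for sufficiency, and the definition of a deep hole for necessity. Your case analysis also shows why that extension works, since the MDS property of $C$ is used only in case (i), where $d(C)\geqslant n-k$ already suffices. The assumption $k<n$ that you use to get $\vx\notin C$ is implicit in the statement, which refers to an $[n,k+1]$ code.
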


\begin{lemma}[{\cite[Lemma 6]{WuDC2025}}]\label{lem: deephole2} 
Suppose $ G $ is a generator matrix of an $ [n,k] $ MDS code $ C $ over $ \fq $ with covering radius $ \rho(C) = n-k $. 
Then a vector $ \vu \in \fqn $ is a deep hole of $ C $ if and only if $ G' = \big( \frac{G}{\vu} \big) $ generates an MDS code.
\end{lemma}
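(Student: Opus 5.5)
The plan is to translate the deep-hole condition into a statement about the minimum weight of the coset $\vu + C$, and then compare this with the Singleton bound for the enlarged code. Throughout, $C$ is an $[n,k,n-k+1]$ MDS code with $\rho(C)=n-k$. Let $C'$ denote the code generated by $G'$, that is,
\[
C' = \{ a\vu + \vc \colon a \in \fq,\ \vc \in C\}.
\]
The key observation is that for $a \neq 0$ we have $wt(a\vu + \vc) = wt(\vu + a^{-1}\vc)$. As $\vc$ ranges over $C$, so does $a^{-1}\vc$. Hence
\[
\min\{ wt(\vx) \colon \vx \in C' \setminus C\} = \min_{\vc \in C} d(\vu, \vc) = d(\vu, C).
\]

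For the forward direction, assume $\vu$ is a deep hole, so $d(\vu,C)=\rho(C)=n-k$. Since $n-k \geq 1$, we have $\vu \notin C$, and therefore $G'$ has rank $k+1$. Nonzero codewords of $C'$ that lie in $C$ have weight at least $n-k+1$, because $C$ is MDS. By the key observation, codewords in $C'\setminus C$ have weight at least $n-k$. Thus $d(C') \geq n-k = n-(k+1)+1$. By the Singleton bound, $C'$ is an $[n,k+1]$ MDS code.

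For the converse, assume $G'$ generates an MDS code. Here I read this as saying that $G'$ has full rank $k+1$, since otherwise the statement is vacuous or degenerate. Then $\vu \notin C$, and $d(C') = n-k$. By the key observation, $d(\vu,C) \geq n-k$. Combined with $d(\vu,C) \leq \rho(C) = n-k$ (see also \cref{lem: RedundancyBound}), this gives $d(\vu,C) = \rho(C)$, so $\vu$ is a deep hole.

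The only delicate point is the rank issue, namely that $\vu \notin C$. In the forward direction it follows from $n-k>0$. In the converse it is part of the meaning of "generates an MDS code" of dimension $k+1$. The rest is the elementary scaling argument above.
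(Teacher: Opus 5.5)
Your proof is correct. The paper does not prove this lemma; it quotes it from \cite{WuDC2025}. Your argument is the standard one and uses the same ingredients the paper cites for its AMDS analogue \cref{lem: AMDS-CR}: the Supercode Lemma (your identity $\min\{wt(\vx)\colon \vx\in C'\setminus C\}=d(\vu,C)$ for $\vu\notin C$ is a sharpened form of it) and the Redundancy Bound.

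One small correction to your remark on interpretation. Reading ``generates an MDS code'' as ``generates an $[n,k+1]$ MDS code'' is not just a way to avoid a degenerate case; it is necessary. If $\vu\in C$, then $G'$ generates $C$ itself, which is MDS, while $d(\vu,C)=0<n-k$, so the converse would be false. This reading matches the companion statement \cref{lem: coveringradius}. Likewise, $k<n$ is an implicit standing hypothesis of the lemma rather than something you derive.
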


The following lemma establishes a relationship between extended codes and deep holes of MDS codes.
\begin{lemma}[{\cite[Theorem 1]{WuDC2025}}]\label{lem: deephole} 
Let $ C $ be an $[n, k]$ MDS code over $\fq$. 
Then for any $ \ve \in \fqn $, the extended code $ \overline{C}(\ve) $ in \eqref{eq: bar_C} is MDS if and only if $ \rho(C^{\perp}) = k $ and $ \ve $ is a deep hole of the dual code $ C^{\perp} $.
\end{lemma}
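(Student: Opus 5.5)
The plan is to pass to the dual code and read everything off the parity-check matrix from \cref{lem: bar_C}. It is standard that a linear code is MDS if and only if its dual is MDS. So $\overline{C}(\ve)$ is an $[n+1,k]$ MDS code if and only if $\overline{C}(\ve)^{\perp}$ is an $[n+1,n-k+1]$ MDS code, i.e.\ has minimum distance $k+1$.

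Let $H$ be a generator matrix of $C^{\perp}$. By \cref{lem: bar_C}, $\overline{C}(\ve)^{\perp}$ is generated by
\[
\begin{pmatrix} H & 0^\T \\ \ve & -1 \end{pmatrix}.
\]
Its rows are linearly independent: the last row is the only one with a nonzero final coordinate, and the rows of $H$ are independent. Hence the dimension is indeed $n-k+1$. Every codeword of $\overline{C}(\ve)^{\perp}$ then has the form $(\vc + a\ve,\,-a)$ with $\vc \in C^{\perp}$ and $a \in \fq$.

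I would split the nonzero codewords into two cases according to whether $a=0$.

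\emph{Case $a=0$.} The codeword is $(\vc,0)$ with $\vc \ne \boldsymbol{0}$. Since $C$ is MDS, $C^{\perp}$ is an $[n,n-k,k+1]$ MDS code, so $wt(\vc,0) \geqslant k+1$ automatically. These codewords never obstruct the MDS property.

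\emph{Case $a \ne 0$.} Scaling by $-a^{-1}$, the codewords are, up to a scalar, $(\ve - \vc', 1)$ with $\vc' = -a^{-1}\vc$ ranging over all of $C^{\perp}$. Their weight is $d(\ve,\vc') + 1$. So every such codeword has weight at least $k+1$ if and only if $d(\ve,\vc') \geqslant k$ for all $\vc' \in C^{\perp}$, i.e.\ $\min_{\vc' \in C^{\perp}} d(\ve,\vc') \geqslant k$.

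Combining the two cases, $\overline{C}(\ve)$ is MDS if and only if $d(\ve, C^{\perp}) \geqslant k$. By the Redundancy Bound (\cref{lem: RedundancyBound}) applied to the $[n,n-k]$ code $C^{\perp}$, we have $d(\ve,C^{\perp}) \leqslant \rho(C^{\perp}) \leqslant k$. Therefore $d(\ve,C^{\perp}) \geqslant k$ holds exactly when $\rho(C^{\perp}) = k$ and $\ve$ attains this covering radius, which by \cref{def: coveringradius} means $\ve$ is a deep hole of $C^{\perp}$. This gives both directions at once.

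There is no serious obstacle. The only points needing care are the duality fact for MDS codes, the dimension count for $\overline{C}(\ve)^{\perp}$, and noticing that the final coordinate $-a$ contributes exactly one to the weight. That extra $1$ is precisely what turns the target distance $k+1$ into the covering-radius condition $k$.
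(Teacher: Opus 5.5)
The paper does not prove this lemma; it quotes it from \cite[Theorem 1]{WuDC2025}. So I am comparing your argument with the route suggested by the surrounding cited lemmas. Your proof is correct and complete.

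There is one cosmetic sign slip. Multiplying $(\vc + a\ve, -a)$ by $-a^{-1}$ gives $(\vc' - \ve, 1)$, not $(\ve - \vc', 1)$. The weight is $d(\ve,\vc') + 1$ either way, so nothing downstream changes.

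Within the paper's toolkit, the natural proof would go through the generator-matrix criteria \cref{lem: coveringradius} and \cref{lem: deephole2}, applied to $C^{\perp}$. Let $H$ be a generator matrix of $C^{\perp}$.
\begin{enumerate}
\item One first shows that $\overline{C}(\ve)^{\perp}$ is MDS if and only if $\big( \frac{H}{\ve} \big)$ generates an $[n, n-k+1]$ MDS code. One direction comes from puncturing the last coordinate. The other is exactly your two-case weight count.
\item Then \cref{lem: coveringradius} gives $\rho(C^{\perp}) = k$, and \cref{lem: deephole2} identifies $\ve$ as a deep hole. The same two lemmas give the converse.
\end{enumerate}

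You bypass both cited lemmas. You pass directly from the weight $d(\ve,\vc')+1$ of the codewords with $a \neq 0$ to the quantity $\min_{\vc' \in C^{\perp}} d(\ve,\vc')$. You then close with the Redundancy Bound \cref{lem: RedundancyBound}. In effect you reprove, inline, exactly the special case of those lemmas that is needed.

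What your route buys is elementarity: it uses only MDS duality and $\rho(C^{\perp}) \leqslant k$. It also makes transparent that the single extra coordinate turns the target distance $k+1$ into the covering-radius value $k$.

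The lemma-based route is less self-contained. Its advantage is that it ties the statement to the \emph{append a row and test for MDS} criterion the paper relies on later in \cref{th:crdp}.
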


\subsection{Some auxiliary results}

Generally speaking, a linear code is not necessarily MDS or AMDS. 
However, the following lemmas provide necessary and sufficient conditions for a linear code to be  MDS or AMDS, respectively. 

\begin{lemma}[{\cite[Theorem~2.4.3]{FECC2003}}]\label{lem: MDS}
Let $ G $ be the generator matrix of an $ [n,k,d] $ linear code $ C $, 
then $ C $ is MDS if and only if any $ k $ columns of G are linearly independent.
\end{lemma}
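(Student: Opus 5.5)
The plan is to use the standard characterization of minimum distance through linear dependence among columns of a generator matrix. A codeword is $\vc = \vx G$ for some $\vx \in \f_q^k$, and its zero coordinates are the indices $j$ with $\vx \cdot g_j = 0$, where $g_j$ denotes the $j$-th column of $G$. Since $G$ has rank $k$ and $\vx \neq 0$, the columns at the zero positions of $\vc$ cannot span $\f_q^k$.

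For necessity, suppose $C$ is MDS, so $d = n-k+1$. Assume for contradiction that some $k$ columns $g_{j_1}, \dots, g_{j_k}$ are linearly dependent. Then they span a subspace of dimension at most $k-1$. Hence there is a nonzero $\vx \in \f_q^k$ orthogonal to all of them, for instance a nonzero vector in the left null space of the $k \times k$ submatrix. The codeword $\vx G$ is nonzero because $G$ has full row rank $k$. It vanishes on the $k$ coordinates $j_1, \dots, j_k$, so its weight is at most $n-k < d$. This contradicts the value of $d$.

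For sufficiency, suppose every $k$ columns are linearly independent. Take any nonzero codeword $\vc = \vx G$ with $\vx \neq 0$. If $\vc$ had at least $k$ zero coordinates, the corresponding $k$ columns would form an invertible $k \times k$ submatrix $G_J$ with $\vx G_J = 0$. That forces $\vx = 0$, a contradiction. So every nonzero codeword has at most $k-1$ zeros and weight at least $n-k+1$. Therefore $d \geqslant n-k+1$, and together with the Singleton bound, $d = n-k+1$.

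There is no real obstacle here. The only point needing care is that the codeword $\vx G$ is nonzero, which uses that $G$ has rank $k$ because it is a generator matrix. The same style of argument will also serve later for the AMDS criterion, with the condition that some $k$ columns are dependent while every $k-1$ columns are independent.
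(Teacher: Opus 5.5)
Your proof is correct and is the standard argument behind the cited result; the paper gives no proof of its own, only the reference to Huffman--Pless, and your route (a nonzero codeword $\vx G$ vanishes on a given set of $k$ coordinates exactly when those $k$ columns of $G$ are dependent, with the Singleton bound closing sufficiency) is the usual one. One caution about your closing aside: the AMDS analogue for a generator matrix is not that every $k-1$ columns are independent but, as in \cref{lem: AMDS}, that some $k$ columns have rank at most $k-1$ while every $k+1$ columns have rank $k$ (for instance, $k+1$ columns lying in a $(k-1)$-dimensional subspace, any $k-1$ of them independent, satisfy your condition yet force a nonzero codeword of weight at most $n-k-1$).
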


\begin{lemma}[{\cite[Lemma~3.7]{SuiYLH2022}}]\label{lem: AMDS}
Let $ G $ be the generator matrix of an $ [n,k,d] $ linear code $ C $, 
then $ C $ is AMDS if and only if there exist $ k $ columns in $ G $ whose rank is at most $ k - 1 $ 
and the rank of any $ k + 1 $ columns of $ G $ is exactly $ k $.
\end{lemma}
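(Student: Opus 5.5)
The plan is to translate statements about the minimum distance into statements about ranks of column subsets of $G$, using the zero pattern of codewords. Write $G=(\boldsymbol{g}_1,\dots,\boldsymbol{g}_n)$ with columns $\boldsymbol{g}_j\in\fq^{k}$. Since $G$ has full row rank $k$, every nonzero codeword is $\vc=\vx G$ for a unique nonzero $\vx\in\fq^{k}$, and $c_j=\vx\cdot\boldsymbol{g}_j$. For a set $S\subseteq[n]$, write $G_S$ for the submatrix of $G$ formed by the columns indexed by $S$.

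The first step is the following elementary equivalence. For a set $S\subseteq[n]$, there is a nonzero codeword vanishing on all coordinates in $S$ if and only if there is a nonzero $\vx$ orthogonal to every column in $S$. This in turn holds if and only if $\rk(G_S)\leqslant k-1$, because the left kernel of $G_S$ is nontrivial exactly when its rank is less than $k$.

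From this I will derive the following, for $1\leqslant m\leqslant n$:
\[
d(C)\geqslant n-m+1 \iff \text{every } m \text{ columns of } G \text{ have rank } k.
\]
Indeed, a nonzero codeword of weight at most $n-m$ has at least $m$ zero coordinates. Choosing $m$ of them gives $m$ columns of rank less than $k$. Conversely, $m$ columns of rank less than $k$ yield a nonzero codeword with at least $m$ zeros, hence weight at most $n-m$.

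Applying this with $m=k$ recovers \cref{lem: MDS}: $C$ is MDS iff $d(C)\geqslant n-k+1$, iff every $k$ columns have rank $k$. Applying it with $m=k+1$ (which requires $k+1\leqslant n$, automatic in the AMDS situation since $d=n-k\geqslant1$) gives: $d(C)\geqslant n-k$ iff every $k+1$ columns have rank $k$. Here "rank exactly $k$" is the same as "rank at least $k$", since the columns live in $\fq^{k}$.

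Finally, $C$ is AMDS, i.e.\ $d(C)=n-k$, exactly when $d(C)\geqslant n-k$ holds and $d(C)\geqslant n-k+1$ fails. The first condition is "every $k+1$ columns have rank $k$". The failure of the second is "some $k$ columns have rank at most $k-1$". Combining the two gives the stated criterion.

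There is no real obstacle here. The only points needing care are the correspondence between zero positions and column orthogonality, and the edge condition $k+1\leqslant n$.
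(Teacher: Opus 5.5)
Your argument is correct and complete. The paper gives no proof of this lemma (it is imported from \cite{SuiYLH2022}), so there is no internal proof to compare against. Your route is the standard self-contained justification: a nonzero codeword vanishes on $S$ iff $\rk(G_S)\leqslant k-1$, hence $d(C)\geqslant n-m+1$ iff every $m$ columns have rank $k$, applied at $m=k$ and $m=k+1$. It also recovers \cref{lem: MDS} as the case $m=k$.

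One small point concerns the edge condition. Your remark that $k+1\leqslant n$ is ``automatic in the AMDS situation'' only covers the forward direction. For the converse, note that the existence of $k$ columns of rank at most $k-1$ already forces $n>k$, because $\rk(G)=k$. Alternatively, your criterion still holds for $m=n+1$, where both sides are trivially true: $d(C)\geqslant 0$, and the column condition is vacuous.
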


The following result will be frequently used in the subsequent sections.
\begin{lemma}\label{lem: (Van-M)u=1}
Let $ u_i = \prod\limits_{j = 1, j \ne i}^{n} (\alpha_i - \alpha_j)^{-1} $ for $ i = 1, \ldots, n $. 
For any subset $ \{\alpha_1, \dots, \alpha_n \} \subseteq \fq $ with $ n \geqslant 2 $, it holds that 
\begin{align*}
\sum_{i=1}^{n} u_i \alpha_i^{\ell} = 
\begin{cases}
0, & 0 \leqslant \ell \leqslant n-2, \\
1, & \ell = n-1, \\
\sum\limits_{i=1}^{n} \alpha_i, & \ell = n.
\end{cases}
\end{align*}
\end{lemma}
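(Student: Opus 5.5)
The plan is to read the sums $\sum_{i=1}^{n} u_i \alpha_i^{\ell}$ as the leading coefficient of a Lagrange interpolation polynomial. For a polynomial $f(x)\in\fqx$ with $\deg f\leqslant n$, consider
\[
L_f(x)=\sum_{i=1}^{n} f(\alpha_i)\prod_{j\neq i}\frac{x-\alpha_j}{\alpha_i-\alpha_j}.
\]
Since the $\alpha_i$ are distinct, $L_f$ is the unique polynomial of degree at most $n-1$ that agrees with $f$ at $\alpha_1,\dots,\alpha_n$. Its coefficient of $x^{n-1}$ is exactly $\sum_{i=1}^{n} u_i f(\alpha_i)$, because $u_i=\prod_{j\ne i}(\alpha_i-\alpha_j)^{-1}$. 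So for each $\ell$ I will identify $L_f$ for $f(x)=x^{\ell}$ and read off its top coefficient.

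\emph{Case $0\leqslant \ell\leqslant n-1$.} Here $\deg x^{\ell}\leqslant n-1$, so uniqueness of interpolation gives $L_f=x^{\ell}$. Its coefficient of $x^{n-1}$ is $0$ when $\ell\leqslant n-2$ and $1$ when $\ell=n-1$. This gives the first two lines of the statement.

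\emph{Case $\ell=n$.} Put $g(x)=\prod_{j=1}^{n}(x-\alpha_j)=x^n-\big(\sum_j\alpha_j\big)x^{n-1}+\cdots$. The polynomial $x^n-g(x)$ has degree at most $n-1$ and agrees with $x^n$ at every $\alpha_i$, since $g(\alpha_i)=0$. Hence $L_{x^n}=x^n-g(x)$, whose coefficient of $x^{n-1}$ is $\sum_{j=1}^{n}\alpha_j$. This is the third line.

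An alternative route is the partial-fraction identity
\[
\frac{1}{g(x)}=\sum_{i=1}^{n}\frac{u_i}{x-\alpha_i},
\]
comparing coefficients of $x^{-\ell-1}$ in the expansions at infinity, but the interpolation argument is shorter. No step is a real obstacle. The only points needing care are that the hypothesis $n\geqslant 2$ makes the ranges of $\ell$ in the statement meaningful, and that the $x^{n-1}$ coefficient of $g$ carries the minus sign that cancels in $x^n-g(x)$.
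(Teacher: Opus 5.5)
Your proof is correct and complete. The $x^{n-1}$-coefficient of the Lagrange interpolant of $f$ is exactly $\sum_i u_i f(\alpha_i)$, and uniqueness of interpolation in degree $\leqslant n-1$ settles the range $0\leqslant \ell\leqslant n-1$. For $\ell=n$, $x^n-g(x)$ is the correct interpolant, and you handle the sign of the $x^{n-1}$-coefficient of $g$ correctly.

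The paper gives no proof of this lemma. It uses the identity as a standard fact, the same one that underlies the GRS duality in \cref{lem: GRS1}. The usual alternative is the Vandermonde/Cramer route, which the lemma's label hints at. There the first two lines say $V\vu^{\T}=(0,\dots,0,1)^{\T}$ for the $n\times n$ Vandermonde matrix $V=(\alpha_j^{i})_{0\leqslant i\leqslant n-1,\,1\leqslant j\leqslant n}$, which follows from Cramer's rule and the Vandermonde determinant formula. The case $\ell=n$ then follows by using $g(\alpha_i)=0$ to write $\alpha_i^n$ as $\big(\sum_j\alpha_j\big)\alpha_i^{n-1}$ plus lower powers of $\alpha_i$. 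Your interpolation argument avoids the cofactor sign bookkeeping and treats all three cases uniformly, so nothing needs to be added.
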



\section{A family of non-GRS type linear codes}\label{sec3}

This section defines a family of linear codes based on the extended TGRS code and proves that these codes are non-GRS type.


\begin{definition}\label{def: C}
Let $ n $, $ k $ and $ h $ be positive integers satisfying $ 0 \leqslant h \leqslant k-2 $ and $ 3 \leqslant k < k+1 < n \leqslant q $. 
Let $ \eta $, $ \delta \in \fqstar $, $ \va =  (\alpha_1 $, $ \alpha_2 $, \dots, $ \alpha_n) \in \fqn $ with $ \alpha_i \ne \alpha_j $ for $ i \ne j $, and $ v_1 $, $ v_2 $, \dots, $ v_n \in \fqstar $. 
Denote by $ \mathcal{C} $ the extended TGRS code over $ \fq $ generated by the following matrix:
\begin{align}\label{matrix: G}
	G = \begin{pmatrix}
	v_1 & \dots & v_n & 0 & 0 \\
	v_1 \alpha_1  & \dots & v_n \alpha_n & 0 & 0 \\
	\vdots   &  &  \vdots &  \vdots &  \vdots \\
	v_1 \alpha_1^{h-1}  & \dots & v_n \alpha_n^{h-1} & 0 & 0 \\
	v_1(\alpha_1^h + \eta \alpha_1^{k+1}) & \dots & v_n(\alpha_n^h + \eta \alpha_n^{k+1}) & 1 & 1 \\
	v_1 \alpha_1^{h+1}  & \dots & v_n \alpha_n^{h+1} & 0 & 0 \\
	\vdots   &  &  \vdots &  \vdots &  \vdots \\
	v_1 \alpha_1^{k-2}  & \dots & v_n \alpha_n^{k-2} & 0 & 0 \\
	v_1 \alpha_1^{k-1} & \dots & v_n \alpha_n^{k-1} & 0 & \delta
	\end{pmatrix}.
\end{align} 
Let 
\[
V = \big\{f(x) = \sum_{i=0}^{k-1} a_i x^i + \eta a_h x^{k+1} \colon a_i \in \fq \text{ for } i = 0, 1, \dots, k-1 \big\}.
\]
Then the extended TGRS code $ \mathcal{C} $ can be represented as 
\[
\mathcal{C} = \big\{(v_1 f(\alpha_1), \dots, v_n f(\alpha_n), a_h, a_h + \delta a_{k-1}) \colon f(x) \in V \big\},
\]
where $ a_h $ and $ a_{k-1} $ are the coefficients of $ x^h $ and $ x^{k-1} $ in $ f(x) $, respectively.
\end{definition}

From the above definition, $ \mathcal{C} $ has parameters $ [n + 2, k] $.
In order to better study the non-GRS type property of $ \mathcal{C} $, 
we introduce a punctured version of $ \mathcal{C} $, denoted by $ C_1 $,
which has parameters $ [n + 1, k] $ and is generated by the following matrix:
\[
G_1 =
\begin{pmatrix}\label{matrix: G_1}
  v_1 & \dots & v_n & 0 \\
  v_1 \alpha_1 & \dots & v_n \alpha_n & 0 \\
  \vdots &  & \vdots & \vdots \\ 
  v_1 \alpha_1^{h-1} & \dots & v_n \alpha_n^{h-1} & 0 \\
  v_1 (\alpha_1^h + \eta \alpha_1^{k+1}) & \dots & v_n (\alpha_n^h + \eta \alpha_n^{k+1}) & 1 \\
  v_1 \alpha_1^{h+1} & \dots & v_n \alpha_n^{h+1} & 0 \\
  \vdots &  & \vdots & \vdots \\ 
  v_1 \alpha_1^{k-1} & \dots & v_n \alpha_n^{k-1} & 0 \\
\end{pmatrix},
\]
where $ n $, $ k $ and $ h $ are positive integers satisfying $ 0 \leqslant h \leqslant k-2 $ and $ 3 \leqslant k < k+1 < n \leqslant q $, $ \eta $, $ \delta \in \fqstar $, $ \alpha_1 $, $ \alpha_2 $, \dots, $ \alpha_n \in \fq $ are pairwise distinct, and $ v_1 $, $ v_2 $, \dots, $ v_n \in \fqstar $.

The code $ \mathcal{C} $ is an extended code of $ C_1 $, as shown in the following theorem.

\begin{theorem}\label{thm: C=barC_1}
The code $ \mathcal{C} = \overline{C_1}(\ve) $, 
where $ \ve = (e_1, \dots, e_n, e_{n+1}) \in \fq^{n+1} $ with
\[
   e_i = \frac{\delta  \alpha_i^{n-k}}{ v_i  \prod\limits_{j = 1, j \ne i}^{n} (\alpha_i - \alpha_j)}  \quad \text{for all} \quad 1 \leqslant i \leqslant n
\]  
and 
\[
 e_{n+1} = 1 + \delta \eta \big(\sum\limits_{1 \leqslant i < j \leqslant n} \alpha_i \alpha_j - (\sum_{i = 1}^{n} \alpha_i)^2 \big).
\]   
\end{theorem}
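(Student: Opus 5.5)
The plan is to compare generator matrices via \cref{lem: bar_C}. By that lemma, $\overline{C_1}(\ve)$ is generated by $(G_1, G_1\ve^\T)$. The first $n+1$ columns of $G$ in \eqref{matrix: G} are exactly $G_1$. So it suffices to show that $G_1\ve^\T$ equals the last column of $G$: a $1$ in the row of index $h$, a $\delta$ in the row of index $k-1$, and $0$ elsewhere. Throughout I write $u_i=\prod_{j\ne i}(\alpha_i-\alpha_j)^{-1}$, so that $v_ie_i=\delta u_i\alpha_i^{n-k}$ for $1\leqslant i\leqslant n$. The $(n+1)$-th column of $G_1$ is zero except in row $h$, so $e_{n+1}$ only enters row $h$.

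First I treat the rows indexed by $j\in\{0,\dots,k-1\}\setminus\{h\}$, which correspond to $x^j$. The corresponding entry of $G_1\ve^\T$ is
\[
\sum_{i=1}^n v_i\alpha_i^j e_i=\delta\sum_{i=1}^n u_i\alpha_i^{j+n-k}.
\]
Since $j+n-k\leqslant n-1$, with equality exactly when $j=k-1$, \cref{lem: (Van-M)u=1} gives $\delta$ for $j=k-1$ and $0$ otherwise. This matches the last column of $G$.

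Next I treat the twisted row $h$. Its entry is
\[
\delta\sum_{i=1}^n u_i\alpha_i^{h+n-k}+\delta\eta\sum_{i=1}^n u_i\alpha_i^{n+1}+e_{n+1}.
\]
The first sum vanishes by \cref{lem: (Van-M)u=1}, since $h\leqslant k-2$ gives $h+n-k\leqslant n-2$.

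The main obstacle is the power sum $\sum_i u_i\alpha_i^{n+1}$, which lies beyond the range of \cref{lem: (Van-M)u=1}. I would handle it by reduction. Write $\prod_{j=1}^n(x-\alpha_j)=x^n-s_1x^{n-1}+s_2x^{n-2}-\cdots$, where $s_1=\sum_i\alpha_i$ and $s_2=\sum_{i<j}\alpha_i\alpha_j$. Each $\alpha_i$ is a root, so
\[
\alpha_i^{n+1}=s_1\alpha_i^{n}-s_2\alpha_i^{n-1}+(\text{terms of degree}\leqslant n-2\text{ in }\alpha_i).
\]
Multiplying by $u_i$, summing over $i$, and applying \cref{lem: (Van-M)u=1} for $\ell=n$, $\ell=n-1$ and $\ell\leqslant n-2$ yields
\[
\sum_{i=1}^n u_i\alpha_i^{n+1}=s_1^2-s_2.
\]

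Substituting this and the given value of $e_{n+1}$, the row-$h$ entry becomes
\[
\delta\eta(s_1^2-s_2)+1+\delta\eta(s_2-s_1^2)=1,
\]
as required. Hence $G_1\ve^\T$ equals the last column of $G$, so $\overline{C_1}(\ve)$ and $\mathcal{C}$ have the same generator matrix and therefore coincide.
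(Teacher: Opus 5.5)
Your proposal is correct and takes the same route as the paper, which only says the result follows from \cref{lem: bar_C}. Your row-by-row computation of $G_1\ve^\T$ supplies the verification the paper leaves implicit, including the reduction $\sum_{i=1}^n u_i\alpha_i^{n+1}=s_1^2-s_2$ obtained from the vanishing of $\prod_{j}(x-\alpha_j)$ at each $\alpha_i$ together with \cref{lem: (Van-M)u=1}.
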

\begin{proof}
This follows easily from \cref{lem: bar_C}.
\end{proof}

Before proving that $ \mathcal{C} $ is non-GRS type, we first prove that $ C_1 $ is non-GRS type.
\begin{theorem}\label{thm: C_1noGRS}
For $ 3 \leqslant k < \frac{n+2}{2}$, the code $ C_1 $ is non-GRS type.
\end{theorem}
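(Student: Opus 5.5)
We will use the Schur square. If $C_1$ were monomially equivalent to a GRS code $\mathcal{GRS}_{k,n+1}(\boldsymbol{\beta},\boldsymbol{w})$, then $C_1^2$ would be equivalent to $\mathcal{GRS}_{k,n+1}^2$. By \cref{lem: Schurprod_C} (or \cref{lem: GRS2} applied with general weights), this Schur square is a GRS code of dimension $\min\{2k-1,n+1\}$. Under the hypothesis $k<\frac{n+2}{2}$ we have $2k-1\leqslant n$, so $\dim C_1^2=2k-1$. The plan is therefore to compute a spanning set of $C_1^2$ directly from the rows of $G_1$ and show that $\dim C_1^2\geqslant 2k$. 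This contradicts equivalence to a GRS code, because monomial equivalence preserves the dimension of the Schur square.

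\textbf{Step 1: listing the generators of $C_1^2$.} Denote the rows of $G_1$ by $\boldsymbol{g}_i$, for $i\in\{0,\dots,k-1\}$. For $i\neq h$ we have $\boldsymbol{g}_i=(v_1\alpha_1^i,\dots,v_n\alpha_n^i,0)$, while $\boldsymbol{g}_h=(\dots,v_j(\alpha_j^h+\eta\alpha_j^{k+1}),\dots,1)$. By \cref{lem: Schurprod_C}, $C_1^2$ is spanned by the products $\boldsymbol{g}_i*\boldsymbol{g}_j$. For $i,j\neq h$ the product is the evaluation vector of $v^2x^{i+j}$ with last coordinate $0$. The exponents $i+j$ with $i,j\in\{0,\dots,k-1\}\setminus\{h\}$ cover all of $\{0,\dots,2k-2\}$: for a target $s$, at least two decompositions $s=i+j$ are available since $k\geqslant 3$, and only one of them can use $h$. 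Edge cases such as $s=2h$ or $s=0$ with $h=0$ need checking; for example, $s=0$ with $h=0$ is not reachable in this way. In addition, $\boldsymbol{g}_h*\boldsymbol{g}_h$ has last coordinate $1$, while $\boldsymbol{g}_h*\boldsymbol{g}_j$ for $j\ne h$ gives the evaluation of $v^2(x^{h+j}+\eta x^{k+1+j})$ with last coordinate $0$.

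\textbf{Step 2: showing independence of at least $2k$ vectors.} We exhibit $2k$ linearly independent vectors in $C_1^2$. Take the vectors with last coordinate $0$ corresponding to the monomials $x^s$, for $s$ ranging over the reachable exponents in $\{0,\dots,2k-2\}$. To these, add $\boldsymbol{g}_h*\boldsymbol{g}_j$ for suitable $j$, which after subtracting known monomials yields $v^2x^{k+1+j}$. With $j=k-1$ (allowed since $h\leqslant k-2$) this gives exponent $2k$, beyond $2k-2$. Finally add $\boldsymbol{g}_h*\boldsymbol{g}_h$, whose last coordinate is nonzero. On the first $n$ coordinates these vectors are evaluations of $v_j^2$ times polynomials of degree at most $2k\leqslant n+1$. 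Distinct degrees up to $n-1$ give independence by the Vandermonde argument, and the extra last coordinate of $\boldsymbol{g}_h^2$ separates that vector from the rest. We thus need $2k$ distinct degrees below $n$ together with the last-coordinate vector. Since $2k<n+2$ gives $2k\leqslant n+1$, we have to be careful: degree $2k$ may equal $n$ or $n+1$. In those cases we instead use a product with a lower degree, or obtain independence through the last coordinate and \cref{lem: (Van-M)u=1}.

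\textbf{Main obstacle.} The main difficulty is this boundary bookkeeping. The cases $h=0$ (which may lose $x^0$), $2k=n+1$ and $2k=n$ may need separate treatment, as may the exponent $2h$. In these situations we have to confirm that enough independent vectors remain to reach $\dim C_1^2\geqslant 2k$. First we will try to gain degrees $2k-1$ or $2k$ from $\boldsymbol{g}_h*\boldsymbol{g}_{k-2}$ and $\boldsymbol{g}_h*\boldsymbol{g}_{k-1}$, and count the last-coordinate vector as contributing an extra dimension. This gives at least $(2k-1)+1=2k$ whenever the first-$n$-coordinate parts span a $(2k-1)$-dimensional space. That claim in turn follows from $2k-1\leqslant n$ together with the degree-$(2k-1)$ or degree-$2k$ twisted term.
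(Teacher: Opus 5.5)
Your route is the paper's route: compare $\dim C_1^2$ with $\dim\mathcal{GRS}_{k,n+1}^2=2k-1$ and aim for $\dim C_1^2\ge 2k$. The step you defer is not bookkeeping, though. It is where the argument breaks.

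The part of $C_1^2$ with last coordinate $0$ is spanned by evaluations of polynomials of degree up to $2k$ at only $n\ge 2k-1$ points. So neither the Vandermonde argument nor ``$2k-1\le n$ plus the twisted term'' shows that this part has dimension $2k-1$, and for $k=3$ it is false. Take $k=3$, $h=0$. Products avoiding row $h$ give only $x^s$ with $s\ge 2$. So this part is spanned by the evaluations of $x^2,x^3,x^4,x+\eta x^5,x^6$, and together with $\boldsymbol g_0*\boldsymbol g_0$ these are all the generators of $C_1^2$: only $2k$ of them. For $n=5$ and all $\alpha_i\ne 0$, a nonzero combination vanishing at every $\alpha_i$ must be $\lambda xP(x)$ with $P=\prod_i(x-\alpha_i)$. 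This lies in the span iff $e_1=\eta e_5$, where the $e_j$ are the elementary symmetric functions of the $\alpha_i$. In that case $\dim C_1^2\le 2k-1$.

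Concretely, take $q=11$, $n=5$, $k=3$, $h=0$, $\vv=\boldsymbol 1$, $\va=(1,2,3,4,6)$, $\eta=5$. The rows of $G_1$ are $\boldsymbol g_0=(6,4,10,5,2,1)$, $\boldsymbol g_1=(1,2,3,4,6,0)$, $\boldsymbol g_2=(1,4,9,5,3,0)$, and
\[
10\,\boldsymbol g_0*\boldsymbol g_1+9\,\boldsymbol g_0*\boldsymbol g_2+7\,\boldsymbol g_1*\boldsymbol g_1+4\,\boldsymbol g_1*\boldsymbol g_2+7\,\boldsymbol g_2*\boldsymbol g_2=\boldsymbol 0,
\]
so $\dim C_1^2\le 5=2k-1$. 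Worse, the six columns of $G_1$ lie on the conic $10X_0X_1+9X_0X_2+7X_1^2+4X_1X_2+7X_2^2=0$. This conic is nonsingular (its symmetric matrix has determinant $7\neq 0$ in $\mathbb{F}_{11}$) and has $12$ points. Hence $C_1$ is a $[6,3]$ GRS code, so the statement itself fails here and no choice of lower-degree products can rescue the proof.

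The paper's proof has the same hole. Its spanning set contains $(\va^0,0)$ when $h=0$. For $k=3$ it also uses $(\va,0)$ and $(\va^5,0)$ separately (respectively $(\va^3,0)$ and $(\va^6,0)$ when $h=1$), although only $\va+\eta\va^5$ (respectively $\va^3+\eta\va^6$) is available.

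Your argument does work under the extra hypothesis $n\ge 2k+1$. Evaluation is then injective on polynomials of degree at most $2k$. A short case check on $h$ shows the last-coordinate-zero products span a polynomial space of dimension at least $2k-1$ (exactly $2k-1$ when $k=3$), and $\boldsymbol g_h*\boldsymbol g_h$ adds one more. For $k\ge 4$ it also works when $n\in\{2k-1,2k\}$, with one exception: $h=0$, $n=2k-1$, $0\in\{\alpha_i\}$. There $C_1$ has two equal columns, so it is non-GRS simply because it is not MDS.
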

\begin{proof}
Without loss of generality, we assume $ \vv = \boldsymbol{1} $.  
For any integer $ i \geqslant 0 $, let $ \va^i = (\alpha_1^i, \dots, \alpha_n^i) $. 
For $ 3 \leqslant k < \frac{n+2}{2}$, by \cref{lem: GRS2}, we only need to show that the dimension of the Schur product $ C_1^2 $ is not equal to $ 2k-1$. 
From the definition of $ C_1 $ and \cref{lem: Schurprod_C}, 
we have
\begin{align*}
C_1^2  
= & \langle (\va^{i+j}, 0), \va^i (\va^h + \eta \va^{k+1}, 0), (\va^h + \eta \va^{k+1}, 1)^2 \rangle  \\ 
= & \langle (\va^s, 0), (\va^{i+h} + \eta \va^{i+k+1}, 0), (\va^{2h} + 2\eta \va^{h+k+1} + \eta^2 \va^{2k+2}, 1) \rangle 
\end{align*}
where $ 0 \leqslant s \leqslant 2k-2 $, $ 0 \leqslant i, j \leqslant k-1 $ and $ i, j \ne h $.
We analyze the following cases.
\begin{enumerate}[(i)]
\item If $ h = 0 $, we have $ 1 \leqslant i \leqslant k-1 $, then $ C_1^2 $ can be expressed as
    \begin{align*}
    C_1^2 = \langle 
    & (\va^s, 0), (\va + \eta \va^{k+2}, 0), \dots, (\va^{k-1} + \eta \va^{2k}, 0), \\
    & (\va^0 + 2\eta \va^{k+1} + \eta^2 \va^{2k+2}, 1) \colon 0 \leqslant s \leqslant 2k-2 \rangle.
    \end{align*}
    Furthermore, since $ k \geqslant 3 $, we have $ 2k-1 \geqslant k+2 $, and then
    \[
       C_1^2 =  \langle (\va^t, 0), (2\eta \va^{k+1} + \eta^2 \va^{2k+2}, 1) \colon 0 \leqslant t \leqslant 2k \rangle.
    \] 
\item If $ 1 \leqslant h \leqslant k-2 $, we get $ 0 \leqslant i \leqslant k-1 $ and $ i \ne h $, 
    then $ C_1^2 $ can be written as
    \begin{align*}
    C_1^2 = \langle 
    & (\va^s, 0), (\va^{h} + \eta \va^{k+1}, 0), \dots, (\va^{2h-1} + \eta \va^{h+k}, 0), \\
    & (\va^{2h+1} + \eta \va^{h+k+2}, 0), \dots, (\va^{h+k-1} + \eta \va^{2k}, 0), \\
    & (\va^{2h} + 2\eta \va^{h+k+1} + \eta^2 \va^{2k+2}, 1) \colon 0 \leqslant s \leqslant 2k-2 \rangle.
    \end{align*}
    If $ 2k - 1 = h + k + 1 $, then $ h = k - 2 $, so $ C_1^2 $ can be further simplified to
    \[
    C_1^2 = \left \{
    \begin{array}{lc}
    \langle (\va^{t}, 0), (2\eta \va^{h+k+1} + \eta^2 \va^{2k+2}, 1) \rangle, & 1 \leqslant h \leqslant k-3, \\
    \langle (\va^{\ell}, 0), (2\eta \va^{2k-1} + \eta^2 \va^{2k+2}, 1) \rangle, & h = k-2,
    \end{array}
    \right. 
    \]
    where $ 0 \leqslant t \leqslant 2k $, $ 0 \leqslant \ell \leqslant 2k $ and $ \ell \ne 2k-1 $.
\end{enumerate}
Combining the above two cases, the Schur product $ C_1^2 $ can be expressed as
\begin{align*}
C_1^2 = \left \{
\begin{array}{lc}
\langle (\va^{t}, 0), (2\eta \va^{h+k+1} + \eta^2 \va^{2k+2}, 1) \rangle, & 0 \leqslant h \leqslant k-3, \\
\langle (\va^{\ell}, 0), (2\eta \va^{2k-1} + \eta^2 \va^{2k+2}, 1) \rangle, & h = k-2, 
\end{array}
\right.
\end{align*}
where $ 0 \leqslant t \leqslant 2k $, $ 0 \leqslant \ell \leqslant 2k $ and $ \ell \ne 2k-1 $.

Next, we shall determine the dimension of $C_1^2$ by considering the parity of  $ q $.
If $ q $ is odd, then
\begin{align*}
C_1^2 = 
\langle (\va^{t}, 0), (\va^{2k+2}, \eta^{-2}) \colon 0 \leqslant t \leqslant 2k \rangle.
\end{align*}
Since $ k \leqslant \frac{n+1}{2}$, we get $ 2k-1 \leqslant n $. Thus the following matrix
\[
\begin{pmatrix}
  1 & \dots & 1 & 0 \\
  \alpha_1 & \dots &  \alpha_n & 0 \\
  \vdots &  & \vdots & \vdots \\ 
  \alpha_1^{2k-1} & \dots & \alpha_n^{2k-1} & 0 \\
  \alpha_1^{2k} & \dots & \alpha_n^{2k} & 0 \\
  \alpha_1^{2k+2} & \dots & \alpha_n^{2k+2} & \eta^{-2} \\
\end{pmatrix}
\]
has a $ 2k \times 2k $ invertible submatrix as follows:
\[
\begin{pmatrix}
  1 & \dots & 1 & 0 \\
  \alpha_1 & \dots &  \alpha_{2k-1} & 0 \\
  \vdots &  & \vdots & \vdots \\ 
  \alpha_1^{2k-2} & \dots & \alpha_{2k-1}^{2k-2} & 0 \\
  \alpha_1^{2k+2} & \dots & \alpha_{2k-1}^{2k+2} & \eta^{-2} \\
\end{pmatrix}.
\]
So $ \dim(C_1^2) \geqslant  2k > 2k-1$.
Otherwise, if $ q $ is even, then
\begin{align*}
C_1^2 & = 
\begin{cases}
\langle (\va^{t}, 0), (\va^{2k+2}, \eta^{-2}) \rangle, & 0 \leqslant h \leqslant k-3, \\
\langle (\va^{\ell}, 0), (\va^{2k+2}, \eta^{-2}) \rangle, & h = k-2, 
\end{cases}
\end{align*}
where $ 0 \leqslant t \leqslant 2k $, $ 0 \leqslant \ell \leqslant 2k $ and $ \ell \ne 2k-1 $.
Similar to the case of odd $ q $, we also obtain $ \dim(C_1^2) \geqslant  2k > 2k-1$.

Thus, $ C_1 $ is non-GRS type for $ 3 \leqslant k < \frac{n+2}{2}$. 
\end{proof}

In the following, we will indicate that $ \mathcal{C} $ is non-GRS type.
\begin{theorem}\label{thm: CnonGRS}
If $ 3 \leqslant k \leqslant n-3 $, then $ \mathcal{C} $ is non-GRS type.
\end{theorem}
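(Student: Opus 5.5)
The plan is to split the range $3 \leqslant k \leqslant n-3$ into two parts and use two different tools. For small $k$, the lemmas already available do the work. For large $k$, I would pass to the dual code and run a Schur-square argument.

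\emph{Small $k$.} Assume $3 \leqslant k < \frac{n+2}{2}$. By \cref{thm: C=barC_1} we have $\mathcal{C} = \overline{C_1}(\ve)$. If $\mathcal{C}$ were monomially equivalent to a GRS code, then \cref{lem: barC_GRS_C} would make $C_1$ monomially equivalent to a GRS code. This contradicts \cref{thm: C_1noGRS}, so this case is finished.

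\emph{Large $k$.} Assume $\frac{n+2}{2} \leqslant k \leqslant n-3$. The dual of a GRS code is again a GRS code, and monomial equivalence passes to duals. It therefore suffices to show that $\mathcal{C}^\perp$ is non-GRS type. This is an $[n+2, n+2-k]$ code, and its dimension $m = n+2-k$ satisfies $5 \leqslant m \leqslant \frac{n+2}{2}$. By \cref{lem: GRS2}, applied with length $n+2$, a GRS code of dimension $m$ has Schur square of dimension $2m-1$. So the goal is to show $\dim((\mathcal{C}^\perp)^2) \neq 2(n+2-k)-1$.

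The steps for the large-$k$ case are as follows.
\begin{enumerate}[(1)]
\item Describe $\mathcal{C}^\perp$ explicitly, taking $\vv = \boldsymbol{1}$ without loss of generality. Set $u_i = \prod_{j\ne i}(\alpha_i-\alpha_j)^{-1}$. A vector $(u_1 g(\alpha_1),\dots,u_n g(\alpha_n), y, z)$ is orthogonal to all rows of $G$ exactly when a short list of linear conditions holds, each obtained from \cref{lem: (Van-M)u=1}. Writing $g = \sum_{\ell \leqslant n-1} b_\ell x^\ell$, orthogonality to the row $\va^i$ (with $i \neq h$, $i \leqslant k-2$) forces $b_{n-1-i} = 0$. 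The row $\va^{k-1}$ forces $b_{n-k} + \delta z = 0$. The twisted row $\va^h+\eta\va^{k+1}$ gives a relation between $b_{n-1-h}$, $y$, $z$ and the low coefficients $b_0,\dots,b_{n-k-2}$; these low coefficients enter through the power sums $\sum_i u_i \alpha_i^{\ell}$ for $n-1 \leqslant \ell \leqslant n+k$.
\item Conclude that $\mathcal{C}^\perp$ is spanned by three kinds of vectors: $(\va^s * \vu, 0, 0)$ for $0 \leqslant s \leqslant n-k-2$, each corrected by a suitable multiple of a last-coordinate vector; one vector involving $\va^{n-k}$ and $z$; and one vector involving $\va^{n-1-h}$ and $y$.
\item Compute the Schur square as in the proof of \cref{thm: C_1noGRS}. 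The products of the low-degree vectors give roughly $\va^{t} * \vu^2$ for $0 \leqslant t \leqslant 2(n-k-2)$. The two special vectors contribute high powers such as $\va^{2(n-1-h)}$, together with nonzero entries in the last two coordinates.
\item Exhibit $2m$ linearly independent vectors, using a Vandermonde-type minor as in \cref{thm: C_1noGRS}.
\end{enumerate}

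The main obstacle is step (1). The cleanest course is to compute the dual through \cref{lem: bar_C} applied to $C_1$. Any correction terms coming from power sums of degree $\geqslant n$ can then be absorbed by span reductions. The degree gap between $\va^{n-1-h}$ and the low-degree block, which has degree at most $n-k-2 < n-1-h$ because $h \leqslant k-2$, should still leave the Schur square strictly larger than $2m-1$. The condition $k \leqslant n-3$ is what guarantees $m \geqslant 5$, so that enough low-degree generators exist for this count to work.
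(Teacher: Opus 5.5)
\textbf{Small $k$.} This case is exactly the paper's first case, and reducing the large-$k$ case to the Schur square of $\mathcal{C}^\perp$ is legitimate.

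\textbf{The gap in the large-$k$ case.} Your target inequality $\dim((\mathcal{C}^\perp)^2)\neq 2m-1$ is false for some admissible parameters, so step (4) cannot be carried out in general. Take $h=0$ and $\alpha_n=0$; for instance $n=9$, $k=6$ lies inside your range.
\begin{enumerate}[(i)]
\item The $n$th column of $G$ is $(1,0,\dots,0)^{\T}$, which equals the $(n+1)$th column. Hence $\boldsymbol{w}=(0,\dots,0,1,-1,0)\in\mathcal{C}^\perp$.
\item Put $r=n-k\geqslant 3$. A basis of $\mathcal{C}^\perp$ is $\boldsymbol{w}$, together with $\boldsymbol{w}_1=(\vu*\va^{r},y,-\delta^{-1})$ for a suitable $y$, and $\boldsymbol{b}_s=(\vu*\va^{s},y_s,0)$ for $0\leqslant s\leqslant r-1$, where $y_s=0$ for $s\leqslant r-3$ and $y_{r-2}=-\eta$. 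That is $m$ vectors; your list in step (2) stops at $s=n-k-2$ and so has only $m-1$.
\item Write $\boldsymbol{\epsilon}_j$ for the $j$th unit vector of $\fq^{n+2}$. Because $\alpha_n=0$, we get $\boldsymbol{b}_{r-2}*\boldsymbol{w}=\eta\boldsymbol{\epsilon}_{n+1}$ and $\boldsymbol{b}_0*\boldsymbol{w}=u_n\boldsymbol{\epsilon}_n$. Every other product involving $\boldsymbol{w}$ lies in $\langle\boldsymbol{\epsilon}_n,\boldsymbol{\epsilon}_{n+1}\rangle$.
\item The products among the $\boldsymbol{b}_s$ and $\boldsymbol{w}_1$, other than $\boldsymbol{w}_1^2$, give $(\vu^2*\va^{t},*,0)$ for $t\leqslant 2r-1$. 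Finally $\boldsymbol{w}_1^2=(\vu^2*\va^{2r},*,\delta^{-2})$.
\item So $(\mathcal{C}^\perp)^2$ is spanned by the $2r+3$ vectors $\boldsymbol{\epsilon}_{n+1}$, $\boldsymbol{\epsilon}_n$, $(\vu^2*\va^{t},0,0)$ for $t\leqslant 2r-1$, and $(\vu^2*\va^{2r},0,\delta^{-2})$.
\item These are independent. The code $\langle\vu^2*\va^{t}:t\leqslant 2r\rangle$ is MDS of length $n$ and dimension $2r+1\leqslant n-1$, so it contains no weight-one word.
\item Hence $\dim((\mathcal{C}^\perp)^2)=2r+3=2m-1$, which is exactly the GRS value.
\end{enumerate}
The theorem still holds here for a trivial reason: $\mathcal{C}$ has two equal columns, so it is not MDS. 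Your invariant, however, cannot detect this. The heuristic in step (3) breaks because exponents such as $2(n-1-h)$ exceed $n-1$ and must be reduced modulo $\prod_i(x-\alpha_i)$. For $h=0$ and $\alpha_n=0$, your special vector involving $\va^{n-1-h}$ collapses to the weight-two word $\boldsymbol{w}$. Separately, orthogonality to $\va^i$ does not force $b_{n-1-i}=0$; it gives a triangular system in the higher coefficients.

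\textbf{What the paper does instead.} The paper avoids this by testing minimum distance rather than dimension.
\begin{enumerate}[(i)]
\item It exhibits $\vc_1=(\vu*\va^{n-k-3},0,0)$, $\vc_2=(\vu*\va^{n-k-2},-\eta,0)$ and $\vc_3=(\vu*\va^{n-k-1},-\eta\sum_i\alpha_i,0)$ in $\mathcal{C}^\perp$. This is where $k\leqslant n-3$ is used.
\item It notes $\vc_1*\vc_3-\vc_2^2=-\eta^2\boldsymbol{\epsilon}_{n+1}$, so $(\mathcal{C}^\perp)^2$ has minimum distance $1$.
\item If $\mathcal{C}$ were GRS, $(\mathcal{C}^\perp)^2$ would be an MDS $[n+2,2m-1]$ code with minimum distance $2k-n\geqslant 2$, a contradiction.
\end{enumerate}
This needs no full description of $\mathcal{C}^\perp$ and no case analysis in $h$. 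The same argument already works at your boundary $k=\frac{n+2}{2}$.

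\textbf{How to repair your route.} If you want to keep the dimension count, you must first discard the non-MDS situation, at least $h=0$ with some $\alpha_i=0$. You would then still have to verify $\dim((\mathcal{C}^\perp)^2)\geqslant 2m$ by tracking how each product involving your special vector reduces modulo $\prod_i(x-\alpha_i)$. That is considerably more work than the paper's route.
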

\begin{proof}
Up to the equivalence of codes, we only need to consider $ \vv = \boldsymbol{1} $. 
Then we analyze two cases according to the range of $k$.

Firstly, we consider the case of $ 3 \leqslant k < \frac{n+3}{2}$.  
By \cref{thm: C=barC_1}, $ \mathcal{C} $ is an extended code of $ C_1 $. 
Therefore, by \cref{lem: barC_GRS_C}, to prove that $ \mathcal{C} $ is non-GRS type, 
it suffices to show that $ C_1 $ is non-GRS type. 
From \cref{thm: C_1noGRS}, this holds when $ 3 \leqslant k < \frac{n+2}{2} $. 
Since $ \mathcal{C} $ has length one greater than $ C_1 $, 
we conclude that $ \mathcal{C} $ is non-GRS type.

Secondly, we show that $ \mathcal{C} $ is non-GRS type for $ \frac{n+3}{2} \leqslant k \leqslant n-3 $.
Let $ u_i = \prod\limits_{j = 1, j \ne i}^{n} (\alpha_i - \alpha_j)^{-1} $ for $ i = 1, \ldots, n $.
By \cref{lem: (Van-M)u=1}, we have $ G \vc_i^{\T} = \boldsymbol{0}$ for $ i = 1, 2, 3$,
where
\begin{align*}
\vc_1 & = (
u_1 \alpha_1^{n-k-3}, \dots, u_n \alpha_n^{n-k-3}, 0, 0), \\
\vc_2 & = (u_1 \alpha_1^{n-k-2}, \dots, u_n \alpha_n^{n-k-2}, -\eta, 0),\\
\vc_3 & = (u_1 \alpha_1^{n-k-1}, \dots, u_n \alpha_n^{n-k-1}, - \eta \sum\limits_{i=1}^{n} \alpha_i, 0).
\end{align*}
Thus $ \vc_1, \vc_2, \vc_3 \in \mathcal{C}^{\perp}$, and so
\[ 
\vc = \vc_1 * \vc_3 - \vc_2^2 = (0, \dots, 0, -\eta^2, 0) \in (\mathcal{C}^{\perp})^2.
\]
This implies the minimum distance of $(\mathcal{C}^{\perp})^2 $ is $ d((\mathcal{C}^{\perp})^2) = 1 $. However, if $\mathcal{C}$ were an $[n+2, k] $ GRS code, then
\cref{prop: GRS^2} would imply  that $ d((\mathcal{C}^{\perp})^2)  = 2k-n $. 
Since $ \frac{n+3}{2} \leqslant k \leqslant n-3 $, we get $ d((\mathcal{C}^{\perp})^2) \geqslant 3 $, which is a contradiction. 
Thus $ \mathcal{C} $ is non-GRS type for $ \frac{n+3}{2} \leqslant k \leqslant n-3 $.

Combining both cases gives that $ \mathcal{C}$ is non-GRS type for $ 3 \leqslant k \leqslant n - 3 $.
\end{proof}

     
\section{The MDS and AMDS properties of $ \mathcal{C} $}\label{sec4}

From now on, without loss of generality, we assume that $ \vv = \boldsymbol{1} $.

Let $m,  n, r, h, k$ be integers with $ r \geqslant 0 $, $ 1 \leqslant m \leqslant n $ and $ 0 \leqslant h \leqslant k - 2 $, and $ \alpha_1, \alpha_2, \dots, \alpha_n $ be distinct elements in $ \fq $. 
For any subsets $ E , J \subseteq [n] $ with $ |E| = m $ and $ |J| = k - 1 $, we define the following notations for further use:
\begin{align}\label{eq:Sr}
S_r(E) = \begin{cases}
1, & r = 0, \\
(-1)^r \sum\limits_{i_1 < \dots < i_r \in E }  \alpha_{i_1} \alpha_{i_2} \cdots \alpha_{i_r}, & 1 \leqslant r \leqslant m, \\
0, & r > m, 
\end{cases}
\end{align}
and
\begin{align}\label{eq:Delta}  
\Delta  = S_{k-h-1}(J) \big(S_1(J)^2 - S_2(J) \big) - S_1(J)S_{k-h}(J) + S_{k-h+1}(J). 
\end{align}

\subsection{MDS property of $ \mathcal{C} $}

The linear code $ \mathcal{C} $ is not necessarily MDS, however, we have the following theorem.
\begin{theorem}\label{th: MDSh>0}
Let $ k, n, h $ be integers with $ 3 \leqslant k < k + 1 < n \leqslant q $ and $ 0 \leqslant h \leqslant k-2 $. Then the code $ \mathcal{C} $ is MDS if and only if the following four conditions hold simultaneously:
\begin{enumerate}[\upshape(1)]
    \item \label{MDSh>0I} $ \eta^{-1} \ne S_{k-h+1}(I) - S_1(I) S_{k-h}(I) $ for any subset $ I \subseteq [n] $ with $ |I| = k $.
    \item \label{MDSh>0J} $ S_{k-h-1}(J) \ne 0 $ for any subset $ J \subseteq [n] $ with $ |J| = k-1 $.
    \item $ S_{k-h-1}(J) + \delta - \delta \eta \Delta \neq 0 $ for any subset $ J \subseteq [n] $ with $ |J| = k-1 $.
    \item $ S_{k-h-2}(L) \ne 0 $ for any subset $ L \subseteq [n] $ with $ |L| = k-2 $.
\end{enumerate}
\end{theorem}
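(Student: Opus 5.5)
By \cref{lem: MDS}, $\mathcal{C}$ is MDS if and only if every $k\times k$ submatrix of $G$ in \eqref{matrix: G} is nonsingular, where we take $\vv=\boldsymbol{1}$. I will sort the $k$-subsets of columns by how many of the last two columns they contain. This gives four cases, and I expect each to produce exactly one of the four conditions.

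\textbf{Case (a): $k$ columns indexed by $I\subseteq[n]$, $|I|=k$.} I will use linearity of the determinant in row $h$. This splits $\det$ into the Vandermonde determinant $V(I)$ plus $\eta$ times the determinant in which row $h$ is $(\alpha_i^{k+1})_{i\in I}$. To evaluate the second term, I reduce $x^{k+1}$ modulo $\prod_{i\in I}(x-\alpha_i)=\sum_{r=0}^{k} S_r(I)x^{k-r}$, with $S_r$ as in \eqref{eq:Sr}. This is legitimate because the row vectors $(\alpha_i^j)$ satisfy the same linear relations as the monomials modulo this polynomial. Then:
\begin{itemize}
\item $x^k\equiv -\sum_{r\ge1}S_r(I)x^{k-r}$;
\item multiplying by $x$ and reducing once more, the coefficient of $x^h$ in $x^{k+1}$ is $S_1(I)S_{k-h}(I)-S_{k-h+1}(I)$.
\end{itemize}
Since only the $x^h$-component survives in the determinant, we get
\[
\det=V(I)\bigl(1+\eta(S_1(I)S_{k-h}(I)-S_{k-h+1}(I))\bigr).
\]
This is nonzero for every $I$ exactly when condition (1) holds.

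\textbf{Case (b): $k-1$ columns $J\subseteq[n]$, $|J|=k-1$, together with column $n+1$.} Column $n+1$ is the unit vector at row $h$. Expanding along it leaves the $(k-1)\times(k-1)$ matrix with rows $x^i$ for $0\le i\le k-1$, $i\ne h$. This is a Vandermonde matrix with the row $x^h$ deleted. By the standard formula (or the same reduction argument, writing $x^{k-1}$ in terms of the lower powers), its determinant is $\pm V(J)\,S_{k-1-h}(J)$. This gives condition (2).

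\textbf{Case (c): $J$ with $|J|=k-1$, together with column $n+2=\ve_h+\delta\ve_{k-1}$.} By linearity in that column, the determinant is the Case (b) determinant plus $\delta$ times the minor obtained by deleting row $k-1$. That minor has rows $x^0,\dots,x^{k-2}$, with row $h$ equal to $x^h+\eta x^{k+1}$. Reducing $x^{k+1}$ modulo $\prod_{j\in J}(x-\alpha_j)$ (degree $k-1$) takes two extra reduction steps beyond Case (a). The coefficient of $x^h$ should come out as $-\Delta$, with $\Delta$ as in \eqref{eq:Delta}: the term $S_{k-h-1}(S_1^2-S_2)$ comes from the twice-reduced $x^{k+1}$, and $-S_1S_{k-h}+S_{k-h+1}$ comes from the cross terms. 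Hence this minor is $V(J)(1-\eta\Delta)$, and the total is $\pm V(J)\bigl(S_{k-h-1}(J)+\delta-\delta\eta\Delta\bigr)$. This gives condition (3).

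This case is the main obstacle. I must track the two-step reduction carefully, and also check that the cofactor signs from expanding along rows $h$ and $k-1$ agree, so that the terms $S_{k-h-1}$ and $\delta(1-\eta\Delta)$ appear with the same relative sign. I would verify the sign on a small instance, for example $k=3$, $h=0$.

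\textbf{Case (d): $L\subseteq[n]$ with $|L|=k-2$, together with both extra columns.} Subtract column $n+1$ from column $n+2$; this leaves $\delta\ve_{k-1}$ and $\ve_h$. Expanding along both removes rows $h$ and $k-1$. What remains is a Vandermonde matrix on $L$ with rows $x^0,\dots,x^{k-2}$ and row $x^h$ omitted. Its determinant is $\pm\delta\,V(L)\,S_{k-2-h}(L)$, which gives condition (4).

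The four cases exhaust all $k$-subsets of columns. Each nonvanishing condition is equivalent to the corresponding condition in the statement, because $V(\cdot)\ne0$ (the $\alpha_i$ are distinct) and $\delta\ne0$. This proves the equivalence.
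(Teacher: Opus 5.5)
Your argument is correct. It uses the same four-way split of the $k$-subsets of columns as the paper and arrives at the same four determinant formulas; what differs is how you evaluate them.

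For case (1), the paper borders the Vandermonde matrix with two variable columns $x,y$ and reads off the coefficient of $x^hy^k$ in the product formula. For $B_2,B_3,B_4$ it only asserts the formulas ``by a similar calculation''. Your reduction of $x^{k+1}$ modulo the node polynomial $\prod(x-\alpha_i)$ works uniformly in all four cases. It also shows where $\Delta$ comes from: $\Delta$ is minus the $x^h$-coefficient of $x^{k+1}$ modulo $\prod_{j\in J}(x-\alpha_j)$. This checks out, since that coefficient is $-(S_1^2-S_2)S_{k-h-1}+S_1S_{k-h}-S_{k-h+1}=-\Delta$.

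The sign you left open in case (c) does come out with a plus, and you can avoid cofactor signs altogether. Write the rows of $B_3$ on $J$ in the basis $\alpha^0,\dots,\alpha^{k-2}$. This gives $B_3=T\begin{pmatrix}W&0\\0&1\end{pmatrix}$, where $W$ is the Vandermonde matrix on $J$ and $\det W=V(J)$. Clear the other entries using the unit rows $j\notin\{h,k-1\}$. Then $T$ reduces to the $2\times 2$ block $\begin{pmatrix}1-\eta\Delta&1\\-S_{k-h-1}(J)&\delta\end{pmatrix}$ on rows and columns $\{h,k-1\}$. Its determinant is $S_{k-h-1}(J)+\delta-\delta\eta\Delta$. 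Putting $\delta=0$ in the same block gives case (b), so the relative sign is just linearity in the second column.
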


\begin{proof}
By \cref{lem: MDS}, the code $ \mathcal{C} $ is MDS if and only if any $ k $ columns of $ G $ are linearly independent, equivalently, any submatrix formed by selecting $ k $ columns from $ G $ is full rank. 
We analyze four cases based on the selection of columns.
\begin{enumerate}[\upshape(1)]
\item Assume that the $ k $ selected columns come from the first $ n $ columns, then they form a submatrix with the following form:
\begin{align*}
	B_1
	& = \begin{pmatrix}
	1 & \dots & 1  \\
	\alpha_{i_1} & \dots & \alpha_{i_k}  \\
	\vdots &  &  \vdots  \\
	\alpha_{i_1}^{h-1} & \dots & \alpha_{i_k}^{h-1}  \\
	\alpha_{i_1}^h + \eta \alpha_{i_1}^{k+1} & \dots & \alpha_{i_k}^h + \eta \alpha_{i_k}^{k+1}  \\
	\alpha_{i_1}^{h+1} & \dots &  \alpha_{i_k}^{h+1}  \\
	\vdots &  & \vdots   \\
	\alpha_{i_1}^{k-1} & \dots & \alpha_{i_k}^{k-1} 
	\end{pmatrix},
\end{align*}
where $ I = \{i_1, i_2, \dots, i_k \} \subseteq [n]$. 
Then
\begin{align*}
	\det(B_1) 
	 = \prod_{i,j \in I \atop i < j} (\alpha_j - \alpha_i) +
	(-1)^{k-h-1} \eta \cdot 
    \begin{vmatrix}
	1 & 1 & \dots & 1  \\
	\alpha_{i_1} & \alpha_{i_2} & \dots & \alpha_{i_k}  \\
	\vdots & \vdots &  &  \vdots  \\
	\alpha_{i_1}^{h-1} & \alpha_{i_2}^{h-1} & \dots & \alpha_{i_k}^{h-1}  \\
	\alpha_{i_1}^{h+1} & \alpha_{i_2}^{h+1} & \dots & \alpha_{i_k}^{h+1}  \\
	\vdots & \vdots &  & \vdots   \\
	\alpha_{i_1}^{k-1} & \alpha_{i_2}^{k-1} & \dots & \alpha_{i_k}^{k-1} \\
	\alpha_{i_1}^{k+1} & \alpha_{i_2}^{k+1} & \dots & \alpha_{i_k}^{k+1}  
	\end{vmatrix}.
\end{align*}
Denote by $A$ the determinant 
\[
    \begin{vmatrix}
	1 & 1 & \dots & 1  \\
	\alpha_{i_1} & \alpha_{i_2} & \dots & \alpha_{i_k}  \\
	\vdots & \vdots &  &  \vdots  \\
	\alpha_{i_1}^{h-1} & \alpha_{i_2}^{h-1} & \dots & \alpha_{i_k}^{h-1}  \\
	\alpha_{i_1}^{h+1} & \alpha_{i_2}^{h+1} & \dots & \alpha_{i_k}^{h+1}  \\
	\vdots & \vdots &  & \vdots   \\
	\alpha_{i_1}^{k-1} & \alpha_{i_2}^{k-1} & \dots & \alpha_{i_k}^{k-1} \\
	\alpha_{i_1}^{k+1} & \alpha_{i_2}^{k+1} & \dots & \alpha_{i_k}^{k+1}  
	\end{vmatrix}.
\]
To calculate $ A $, we only need to calculate the following Vandermonde determinant: 
\[  A(x, y) = 
    \begin{vmatrix}	
    1 & \dots & 1 & 1 & 1 \\
	\alpha_{i_1} & \dots & \alpha_{i_k} & x & y  \\
	\vdots &  &  \vdots & \vdots & \vdots \\
	\alpha_{i_1}^{h-1} & \dots & \alpha_{i_k}^{h-1} & x^{h-1} & y^{h-1}  \\
	\alpha_{i_1}^{h} & \dots & \alpha_{i_k}^{h} & x^{h} & y^{h}  \\
	\alpha_{i_1}^{h+1} & \dots & \alpha_{i_k}^{h+1} & x^{h+1} & y^{h+1}  \\
	\vdots &  & \vdots  &  \vdots  &  \vdots \\
	\alpha_{i_1}^{k-1} & \dots & \alpha_{i_k}^{k-1} & x^{k-1} & y^{k-1} \\
	\alpha_{i_1}^{k} & \dots & \alpha_{i_k}^{k} & x^{k} & y^{k} \\
	\alpha_{i_1}^{k+1} & \dots & \alpha_{i_k}^{k+1} & x^{k+1} & y^{k+1} 
	\end{vmatrix},
\]
where $ x \ne y \ne \alpha_j $, $ j = i_1, \dots, i_k $. 
It follows that 
\begin{align}
      A(x, y)
 & = (y - x) \prod_{j=1}^{k} (y-\alpha_{i_j}) (x-\alpha_{i_j}) \prod_{1 \leqslant s < t \leqslant      k} (\alpha_{i_t} - \alpha_{i_s}) \label{eq: A(x,y)-1} \\
 & = \cdots + (-1)^{h+k+1} A x^h y^k + \cdots. \label{eq: A(x,y)-2}
\end{align} 
By comparing the coefficients of $ x^h y^k $ in \eqref{eq: A(x,y)-1} and \eqref{eq: A(x,y)-2}, we can obtain 
\begin{align*}
A 
   = & (-1)^{h+k+1} \prod_{s,t \in I \atop s < t} (\alpha_t - \alpha_s) \big(S_1(I) S_{k-h}(I) - S_{k-h+1}(I) \big),
\end{align*}
where $ I = \{i_1, \dots, i_k\} \subseteq [n]$.
Thus we conclude that
\[
   \det(B_1) = \prod_{s,t \in I \atop s < t} (\alpha_t - \alpha_s) \bigg(1 - \eta \Big(S_{k-h+1}(I) - S_1(I) S_{k-h}(I) \Big) \bigg).
\]
It follows that the submatrix $ B_1 $ has full rank if and only if 
\[ 
   \eta^{-1} \ne S_{k-h+1}(I) - S_1(I) S_{k-h}(I) 
\]
for any subset $ I \subseteq [n] $ with $ |I| = k $.

\item Assume that the $ k $ selected columns are formed by combining any $ k - 1 $ columns from the first $ n $ columns with the $ (n+1) $th column, the submatrix has the following form: 
\begin{align*}
	B_2 
	& = \begin{pmatrix}
	1 & \dots & 1 & 0 \\
	\alpha_{i_1} & \dots & \alpha_{i_{k-1}} & 0 \\
	\vdots &  &  \vdots &  \vdots  \\
	\alpha_{i_1}^{h-1} & \dots & \alpha_{i_{k-1}}^{h-1} & 0 \\
	\alpha_{i_1}^h + \eta \alpha_{i_1}^{k+1} & \dots & \alpha_{i_{k-1}}^h + \eta \alpha_{i_{k-1}}^{k+1} & 1 \\
	\alpha_{i_1}^{h+1} & \dots &  \alpha_{i_{k-1}}^{h+1} & 0 \\
	\vdots &  & \vdots & \vdots \\
	\alpha_{i_1}^{k-1} & \dots & \alpha_{i_{k-1}}^{k-1} & 0
	\end{pmatrix}.
\end{align*}	
By a similar calculation to case (1), we get 
\[
   \det(B_2) = \prod_{s,t \in J \atop s < t} (\alpha_t - \alpha_s) S_{k-h-1}(J),
\]
where $ J = \{i_1, i_2, \dots, i_{k-1} \} \subseteq [n]$. 
Thus the submatrix $ B_2 $ has full rank if and only if 
$ S_{k-h-1}(J) \ne 0 $ for any subset $ J \subseteq [n] $ with $ |J| = k-1 $.

\item  Assume that the $ k $ selected columns have the following form:
\begin{align*}
	B_3 
	& = \begin{pmatrix}
	1 & \dots & 1 & 0 \\
	\alpha_{i_1} & \dots & \alpha_{i_{k-1}} & 0 \\
	\vdots &  &  \vdots &  \vdots  \\
	\alpha_{i_1}^{h-1} & \dots & \alpha_{i_{k-1}}^{h-1} & 0 \\
	\alpha_{i_1}^h + \eta \alpha_{i_1}^{k+1} & \dots & \alpha_{i_{k-1}}^h + \eta \alpha_{i_{k-1}}^{k+1} & 1 \\
	\alpha_{i_1}^{h+1} & \dots &  \alpha_{i_{k-1}}^{h+1} & 0 \\
	\vdots &  & \vdots & \vdots \\
	\alpha_{i_1}^{k-1} & \dots & \alpha_{i_{k-1}}^{k-1} & \delta
	\end{pmatrix}.
\end{align*}	
Then
\begin{align*}
   \det(B_3) = \prod_{s,t \in J \atop s < t} (\alpha_t - \alpha_s)   \Big(S_{k-h-1}(J) + \delta - \delta \eta \Delta  \Big),
\end{align*}
where 
$ \Delta  = S_{k-h-1}(J) \big(S_1(J)^2 - S_2(J) \big) + S_{k-h+1}(J) - S_1(J)S_{k-h}(J) $       
and $ J = \{i_1, i_2, \dots, i_{k-1} \} \subseteq [n] $.
Thus the submatrix $ B_3 $ has full rank if and only if 
\begin{align*}
    S_{k-h-1}(J) + \delta - \delta \eta \Delta \neq 0
\end{align*}
for any subset $ J \subseteq [n] $ with $ |J| = k-1 $.

\item Assume that the $ k $ selected columns are formed by combining any $ k - 2 $ columns from the first $ n $ columns with the $ (n + 1) $th column and the $ (n + 2) $th column, then the submatrix has the following form:
\begin{align*}
	B_4 
	& = \begin{pmatrix}
	1 & \dots & 1 & 0 & 0 \\
	\alpha_{i_1} & \dots & \alpha_{i_{k-2}} & 0 & 0\\
	\vdots &  &  \vdots &  \vdots &  \vdots \\
	\alpha_{i_1}^{h-1} & \dots & \alpha_{i_{k-2}}^{h-1} & 0 & 0 \\
	\alpha_{i_1}^h + \eta \alpha_{i_1}^{k+1} & \dots & \alpha_{i_{k-2}}^h + \eta \alpha_{i_{k-2}}^{k+1} & 1 & 1 \\
	\alpha_{i_1}^{h+1} & \dots &  \alpha_{i_{k-2}}^{h+1} & 0 & 0 \\
	\vdots &  & \vdots & \vdots & \vdots \\
	\alpha_{i_1}^{k-1} & \dots & \alpha_{i_{k-2}}^{k-1} & 0 & \delta
	\end{pmatrix} 
\end{align*}
Then
\[ 
\det(B_4) = \delta \prod_{s,t \in L \atop s < t} (\alpha_t - \alpha_s) S_{k-h-2}(L),
\]
where $ L = \{i_1, i_2, \dots, i_{k-2} \} \subseteq [n]$.
Thus the submatrix $ B_4 $ has full rank if and only if $ S_{k-h-2}(L) \ne 0 $ 
for any subset $ L \subseteq [n] $ with $ |L| = k-2 $.
\end{enumerate}
Combining all cases, the code $ \mathcal{C} $ is MDS if and only if Conditions $ (1)-(4) $ hold.
\end{proof}

As a special case of \cref{th: MDSh>0}, we have the following result for $h=0$.
\begin{corollary}\label{th: MDSh=0}
Let $ k, n $ be integers with $ 3 \leqslant k < k + 1 < n \leqslant q $. 
Then the code $ \mathcal{C} $ is MDS if and only if the following four conditions hold simultaneously:
\begin{enumerate}[\upshape(1)]
    \item \label{MDSh=0I} $ \eta^{-1} \ne -S_1(I) S_k(I) $ for any subset $ I \subseteq [n] $ with $ |I| = k $.
    \item \label{MDSh=0J} $ S_{k-1}(J) \ne 0 $ for any subset $ J \subseteq [n] $ with $ |J| = k-1 $.
    \item $ S_{k-1}(J) + \delta - \delta \eta S_{k-1}(J) \big(S_1(J)^2 - S_2(J) \big) \ne 0 $ for any subset $ J \subseteq [n] $ with $ |J| = k-1 $. 
    \item $ S_{k-2}(L) \ne 0 $ for any subset $ L \subseteq [n] $ with $ |L| = k-2 $.
\end{enumerate}
\end{corollary}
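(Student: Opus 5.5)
The plan is to specialize \cref{th: MDSh>0} to $h=0$ and simplify each of the four conditions using the definition \eqref{eq:Sr} of $S_r$. The hypotheses $3 \leqslant k < k+1 < n \leqslant q$ and $h = 0 \leqslant k-2$ are exactly those of \cref{th: MDSh>0}. The code $\mathcal{C}$ is therefore MDS if and only if conditions (1)--(4) of that theorem hold with $h=0$, and it remains only to check that these coincide with the conditions listed in the corollary.

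Condition (1) becomes $\eta^{-1} \neq S_{k+1}(I) - S_1(I)S_k(I)$ with $|I| = k$. Since $k+1 > |I|$, \eqref{eq:Sr} gives $S_{k+1}(I) = 0$, which yields condition (1) of the corollary.

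Conditions (2) and (4) become $S_{k-1}(J) \neq 0$ and $S_{k-2}(L) \neq 0$ directly.

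For condition (3), substituting $h=0$ into \eqref{eq:Delta} gives
\[
\Delta = S_{k-1}(J)\big(S_1(J)^2 - S_2(J)\big) - S_1(J)S_k(J) + S_{k+1}(J),
\]
where $|J| = k-1$. Both $k$ and $k+1$ exceed $|J|$, so $S_k(J) = S_{k+1}(J) = 0$, and $\Delta$ reduces to $S_{k-1}(J)\big(S_1(J)^2 - S_2(J)\big)$. This gives condition (3) of the corollary.

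There is no real obstacle here. The only care needed is the convention $S_r(E) = 0$ for $r > |E|$, and checking that the index $k-h-2 = k-2 \geqslant 1$ is in range. That holds because $k \geqslant 3$.
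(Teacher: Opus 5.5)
Your proposal is correct and follows the paper's route exactly: the corollary is obtained by setting $h=0$ in \cref{th: MDSh>0} and using the convention $S_r(E)=0$ for $r>|E|$, which removes $S_{k+1}(I)$, $S_k(J)$ and $S_{k+1}(J)$. The paper only says this is a special case, so your condition-by-condition simplification is a faithful and somewhat more detailed version of its argument.
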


We give the following examples of non-GRS type MDS codes, which are all confirmed by Magma programs.

\begin{example}\label{eg:3.4}
Let $ q = 11 $, $ n = 6 $, $ k = 3 $, $ h = 1 $ and $ \va = (0, 1, 2, 3, 4, 5) $. 
Taking $ (\eta, \delta) = (4, 7) $, \cref{th: MDSh>0} implies that $ \mathcal{C} $ is a non-GRS type MDS code with parameters $ [8, 3, 6] $. 
\end{example}

\begin{example}\label{eg:4.7}
Let $ \xi $ be a primitive element of $ \mathbb{F}_{16} $. 
Let $ n = 7 $, $ k = 4 $, $ h = 2 $, $ \va = (0, \xi, \xi^2, \xi^4, \xi^6, \xi^7, \xi^{13}) $. 
Taking $ (\eta, \delta) = (\xi, \xi^7) $ gives that $ \mathcal{C} $ is a non-GRS type MDS code with parameters $ [9, 4, 6] $ from \cref{th: MDSh>0}. 
Magma verifies two additional $ (\eta, \delta) $ pairs listed in \cref{tab:MDS} that yield the same parameters.
\end{example}

\begin{example}\label{eg:3.6}
Let $ q = 19 $, $ n = 8 $, $ k = 5 $, $ h = 0 $ and $ \va = (3, 4, 5, 6, 13, 14, 15, 16) $. 
Taking $ (\eta, \delta) = (15, 6) $ or $ (15, 18) $, \cref{th: MDSh=0} implies that $ \mathcal{C} $ is a non-GRS type MDS code with parameters $ [10, 5, 6] $. 
\end{example}

\begin{table}[h] 
\centering
\caption{MDS codes obtained from \cref{th: MDSh>0} and \cref{th: MDSh=0}}  
\label{tab:MDS}
\begin{tabular}{ccccccc}
\toprule
MDS code  &  $ q $  &  $ h $  &  $ \va $  &  $ (\eta, \delta) $  & \\
\midrule
$ [8,3,6] $  & 11  &  1  &  $ (0, 1, 2, 3, 4, 5) $  &  $ (4, 7) $  &  \cref{eg:3.4}  \\
\midrule
$ [9,4,6] $  &  16  &  2  &  $ (0, \xi, \xi^2, \xi^4, \xi^6, \xi^7, \xi^{13}) $  &  \makecell[c]{$ (\xi, \xi^7) $, \\$ (\xi^2, \xi^5) $, $ (\xi^{12}, \xi) $}  &  \cref{eg:4.7}  \\
\midrule
$ [10,5,6] $  &  19  &  0  &  $ (3, 4, 5, 6, 13, 14, 15, 16) $  &  $ (15, 6) $, $ (15, 18) $  &  \cref{eg:3.6}  \\
\bottomrule
\end{tabular}
\end{table}

\subsection{AMDS property of $ \mathcal{C} $ }

Now, we consider when the code $ \mathcal{C} $ becomes AMDS. 
Clearly, AMDS codes are non-GRS type.

\begin{theorem}\label{th: AMDSh>0}
Let $ k, n, h $ be integers with $ 3 \leqslant k < n-1 $ and $ 0 \leqslant h \leqslant k-2 $. 
Then the code $ \mathcal{C} $ is AMDS if and only if the following three conditions hold simultaneously:
\begin{enumerate}[\upshape(1)]
\item For any subset $ M \subseteq [n] $ with $ | M | = k + 1 $, there exists a subset $ I \subset M $ with $ | I | = k $ such that 
\[ 
\eta^{-1} \ne S_{k-h+1} (I) - S_{1} (I) S_{k-h} (I).
\]
\item For any subset $ I \subseteq [n] $ with $ | I | = k $, there exist a subset $ J \subset I $ with $ | J | = k - 1 $ such that 
\[ 
S_{k-h-1} (J) + \delta - \delta \eta \Delta \ne 0. 
\]
\item One of the following conditions holds:
\begin{enumerate}[\upshape(a)]
\item There exists a subset $ I \subseteq [n] $ with $ |I| = k $ such that 
\[
\eta^{-1} = S_{k-h+1}(I) - S_1(I) S_{k-h}(I).
\]
\item There exists a subset $ J \subseteq [n] $ with $ |J| = k-1 $ such that $ S_{k-h-1}(J) = 0 $.
\item There exists a subset $ J \subseteq [n] $ with $ |J| = k-1 $ such that 
\[
S_{k-h-1}(J) + \delta - \delta \eta \Delta = 0.
\] 
\item There exists a subset $ L \subseteq [n] $ with $ |L| = k-2 $ such that $ S_{k-h-2}(L) = 0 $. 
\end{enumerate}
\end{enumerate}
\end{theorem}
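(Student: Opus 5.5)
The plan is to apply \cref{lem: AMDS} to the generator matrix $G$ in \eqref{matrix: G} and to reuse the four determinant evaluations $\det(B_1),\dots,\det(B_4)$ from the proof of \cref{th: MDSh>0}. By \cref{lem: AMDS}, $\mathcal{C}$ is AMDS if and only if (i) some $k$ columns of $G$ have rank at most $k-1$, and (ii) every $k+1$ columns of $G$ have rank exactly $k$.

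Condition (i) is precisely the failure of the MDS property. By \cref{th: MDSh>0}, it is equivalent to one of the four conditions (1)--(4) there failing, and this is exactly Condition (3)(a)--(d). This step is immediate.

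For condition (ii), I would split the $(k+1)$-subsets $T$ of columns according to how many of the last two columns they contain.
\begin{enumerate}[(a)]
\item $T\subseteq[n]$, say $T=M$ with $|M|=k+1$. The rank is $k$ if and only if some $k$-subset $I\subset M$ gives a nonsingular $B_1$. Since the Vandermonde factor is nonzero, this is Condition (1).
\item $T=I\cup\{n+1\}$ or $T=I\cup\{n+2\}$ with $|I|=k$. Here I have to be careful. For $T=I\cup\{n+1\}$ it seems to need some $J\subset I$ with $S_{k-h-1}(J)\neq0$ (or $I$ itself good). For $T=I\cup\{n+2\}$ it needs some $J$ with the $B_3$ quantity nonzero (or $I$ itself good).
\item $T=L'\cup\{n+1,n+2\}$ with $|L'|=k-1$. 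This needs a nonzero $k\times k$ minor among $\det(B_2)$ for $J=L'$, $\det(B_3)$ for $J=L'$, and $\det(B_4)$ for $L\subset L'$.
\end{enumerate}

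The main obstacle is showing that cases (b) for column $n+1$ and case (c) impose nothing new, so that only Condition (2) survives. For $I\cup\{n+1\}$ the argument is structural. The first $h$ rows and rows $h+2,\dots,k$ restricted to $I$ form $k-1$ rows of a $k\times k$ Vandermonde matrix, so they have rank $k-1$. The column $n+1$ is $e_{h+1}$, the unit vector in row $h+1$. Hence the whole $k\times(k+1)$ matrix has rank $k$ automatically. In case (c), the columns $n+1$ and $n+2$ span $e_{h+1}$ and $e_k$. The remaining rows (all except rows $h+1$ and $k$) on $L'$ form $k-2$ rows of the Vandermonde matrix in $x^0,\dots,x^{k-2}$ with the row $x^h$ deleted, evaluated at $k-1$ points, which has rank $k-2$. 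So again the rank is automatically $k$.

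The remaining case is $I\cup\{n+2\}$. Column $n+2$ is $e_{h+1}+\delta e_k$. I would argue that the rank is $k$ if and only if some $k$-subset of these columns is independent, i.e.\ either $\det(B_1)\neq0$ for $I$ or $\det(B_3)\neq0$ for some $J\subset I$. Matching this with Condition (2) needs one more remark. When $\det(B_1)=0$, the $k$ columns indexed by $I$ still have rank $k-1$, because their rows other than row $h+1$ form an invertible-rank Vandermonde block of rank $k-1$. In that situation Condition (2) is exactly what forces column $n+2$ outside their span. When $\det(B_1)\neq0$, I need to check that Condition (2) is harmless. I would verify this by showing that if all $B_3$ minors over $J\subset I$ vanish, then column $n+2$ lies in the span of the $I$-columns, and a Cramer's-rule expansion of column $n+2$ in the basis given by the $I$-columns then forces $\det(B_1)=0$ as well. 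This compatibility is the delicate point I expect to need explicit computation. Collecting (a)--(c) with (i) then gives Conditions (1)--(3).
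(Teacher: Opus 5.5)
Your proposal is correct and follows the paper's proof. It uses the same ingredients: \cref{lem: AMDS}, failure of the MDS property giving (3)(a)--(d), and the same four-way split of the $(k+1)$-column sets. The sets containing column $n+1$ (with or without column $n+2$) have rank $k$ automatically by your row argument, the sets inside $[n]$ give Condition (1), and the sets $I\cup\{n+2\}$ give Condition (2). The compatibility you flag in that last case needs no explicit computation. Column $n+2$ is nonzero, so when $\det(B_1)\neq 0$ its Cramer coefficients $\pm\det(B_3)/\det(B_1)$, taken over the $k$ subsets $J\subset I$, cannot all vanish; equivalently, by the exchange lemma any spanning set of columns contains a basis through column $n+2$. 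This is exactly the step the paper passes over with its ``without loss of generality $j=k$''.
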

\begin{proof}
By \cref{lem: AMDS}, the code $ \mathcal{C} $ is AMDS if and only if there exist $ k $ columns in $G$ with rank at most $ k - 1$ and  any $ k + 1 $ columns of $G$ have rank exactly $ k $. 
The first part holds if and only if one of the following conditions holds:
\begin{enumerate}[\upshape(a)]
\item There exists a subset $ I \subseteq [n] $ with $ |I| = k $ such that 
\[
\eta^{-1} = S_{k-h+1}(I) - S_1(I) S_{k-h}(I).
\]
\item There exists a subset $ J \subseteq [n] $ with $ |J| = k-1 $ such that $ S_{k-h-1}(J) = 0 $.
\item There exists a subset $ J \subseteq [n] $ with $ |J| = k-1 $ such that 
\[
S_{k-h-1}(J) + \delta - \delta \eta \Delta = 0.
\] 
\item There exists a subset $ L \subseteq [n] $ with $ |L| = k-2 $ such that $ S_{k-h-2}(L) = 0 $. 
\end{enumerate}

Next, we consider the second part. Any $ k + 1 $ columns of $G$ have rank exactly $ k $ if and only if any $ k \times (k+1) $ submatrix of $ G $ has rank exactly $ k $. We analyze the following four cases.
\begin{enumerate}[\upshape(1)]
\item Assume that we take any $ k + 1$ columns from the first $n$ columns of $G$ to form a $ k \times (k+1) $ submatrix of $ G $, then this submatrix has the following form:
\begin{align*}
	D_1 
	& = \begin{pmatrix}
	1 & \dots & 1  \\
	\alpha_{i_1} & \dots & \alpha_{i_{k+1}}  \\
	\vdots &  &  \vdots  \\
	\alpha_{i_1}^{h-1} & \dots & \alpha_{i_{k+1}}^{h-1}  \\
	\alpha_{i_1}^h + \eta \alpha_{i_1}^{k+1} & \dots & \alpha_{i_{k+1}}^h + \eta \alpha_{i_{k+1}}^{k+1}  \\
	\alpha_{i_1}^{h+1} & \dots &  \alpha_{i_{k+1}}^{h+1}  \\
	\vdots &  & \vdots   \\
	\alpha_{i_1}^{k-1} & \dots & \alpha_{i_{k+1}}^{k-1} 
	\end{pmatrix}, 	
\end{align*}
where $ M = \{i_1, i_2, \dots, i_{k+1} \} \subseteq [n] $. 
Let $ D_1(j) $ be the matrix deleting the $j$th column from $ D_1 $, then $ \rk(D_1) = k $ if and only if there exists a matrix $ D_1(j) $ such that $ \det(D_1(j)) \ne 0 $, where $ 1 \leqslant j \leqslant k+1 $. 
We can suppose $ j = k + 1 $ without loss of generality. 
Note that $ B_1 = D_1(k+1) $. By \cref{th: MDSh>0},
\[
\det(B_1) \ne 0 \quad \text{if and only if} \quad \eta^{-1} \ne S_{k-h+1} (I) - S_{1} (I) S_{k-h} (I).
\]
Then $ \rk(D_1) = k $ if and only if for any subset $ M \subseteq [n] $ with $ | M | = k + 1 $, there exists a subset $ I \subset M $ with $ | I | = k $ such that 
\[ 
\eta^{-1} \ne S_{k-h+1} (I) - S_{1} (I) S_{k-h} (I).
\]
\item Assume the $ k \times (k+1) $ submatrix of $ G $ consists of any $ k $ columns from the first $ n $ columns of $G$ and the $(n + 1)$th column of $ G $, then this submatrix has the following form:
\begin{align*}
	D_2 
	& = \begin{pmatrix}
	1 & \dots & 1 & 0 \\
	\alpha_{i_1} & \dots & \alpha_{i_{k}} & 0 \\
	\vdots &  &  \vdots &  \vdots  \\
	\alpha_{i_1}^{h-1} & \dots & \alpha_{i_{k}}^{h-1} & 0 \\
	\alpha_{i_1}^h + \eta \alpha_{i_1}^{k+1} & \dots & \alpha_{i_{k}}^h + \eta \alpha_{i_{k}}^{k+1} & 1 \\
	\alpha_{i_1}^{h+1} & \dots &  \alpha_{i_{k}}^{h+1} & 0 \\
	\vdots &  & \vdots & \vdots   \\
	\alpha_{i_1}^{k-1} & \dots & \alpha_{i_{k}}^{k-1} & 0
	\end{pmatrix}, 	
\end{align*}
where $ I = \{i_1, i_2, \dots, i_{k} \} \subseteq [n] $.
Denote the row vectors of $ D_2 $ by $ \beta_0 $, $ \beta_1, \ldots, \beta_{k-1} $. 
Since all rows of the Vandermonde matrix are linearly independent, the $ k $ rows of the matrix $ D_2 $ are linearly independent, thus the rank of $ D_2 $ is $ k $. 

\item Assume that the columns of the $ k \times (k+1) $ submatrix of $ G $ are formed by any $ k $ columns from the first $ n $ columns of $ G $, plus the $(n + 2)$th column of $ G $, then this submatrix has the following form:
\begin{align*}
	D_3 
	& = \begin{pmatrix}
	1 & \dots & 1 & 0 \\
	\alpha_{i_1} & \dots & \alpha_{i_{k}} & 0 \\
	\vdots &  &  \vdots &  \vdots  \\
	\alpha_{i_1}^{h-1} & \dots & \alpha_{i_{k}}^{h-1} & 0 \\
	\alpha_{i_1}^h + \eta \alpha_{i_1}^{k+1} & \dots & \alpha_{i_{k}}^h + \eta \alpha_{i_{k}}^{k+1} & 1 \\
	\alpha_{i_1}^{h+1} & \dots &  \alpha_{i_{k}}^{h+1} & 0 \\
	\vdots &  & \vdots &  \vdots   \\
	\alpha_{i_1}^{k-1} & \dots & \alpha_{i_{k}}^{k-1} & \delta
	\end{pmatrix}, 	
\end{align*}
where $ I = \{i_1, i_2, \dots, i_{k} \} \subseteq [n] $. 
Let $ D_3(j) $ be the matrix deleting the $j$th column from $ D_3 $, then $ \rk(D_3) = k $ if and only if there exists a matrix $ D_3(j) $ such that $ \det(D_3(j)) \ne 0 $, where $ 1 \leqslant j \leqslant k $. 
We can suppose $ j = k $ without loss of generality. Note that $ B_3 = D_3(k) $. By \cref{th: MDSh>0},
\[
\det(B_3) \ne 0 \quad \text{if and only if} \quad S_{k-h-1} (J) + \delta - \delta \eta \Delta \ne 0.
\]
Then $ \rk(D_3) = k $ if and only if for any subset $ I \subseteq [n] $ with $ | I | = k $, there exists a subset $ J \subset I $ with $ | J | = k - 1 $ such that 
\[ 
S_{k-h-1} (J) + \delta - \delta \eta \Delta \ne 0. 
\]
\item Assume that the columns of the $ k \times (k+1) $ submatrix of $ G $ consists of any $ k $ columns from the first $ n $ columns of $G$, plus the $(n + 1)$th column and the $(n + 2)$th column of $G$, then this submatrix has the following form:
\begin{align*}
	D_4 
	& = \begin{pmatrix}
	1 & \dots & 1 & 0 & 0 \\
	\alpha_{i_1} & \dots & \alpha_{i_{k-1}} & 0 & 0 \\
	\vdots &  &  \vdots & \vdots & \vdots \\
	\alpha_{i_1}^{h-1} & \dots & \alpha_{i_{k-1}}^{h-1} & 0 & 0 \\
	\alpha_{i_1}^h + \eta \alpha_{i_1}^{k+1} & \dots & \alpha_{i_{k-1}}^h + \eta \alpha_{i_{k-1}}^{k+1} & 1 & 1 \\
	\alpha_{i_1}^{h+1} & \dots &  \alpha_{i_{k-1}}^{h+1} & 0 & 0 \\
	\vdots &  & \vdots & \vdots & \vdots  \\
	\alpha_{i_1}^{k-1} & \dots & \alpha_{i_{k-1}}^{k-1} & 0 & \delta
	\end{pmatrix}, 	
\end{align*}
where $ L = \{i_1, i_2, \dots, i_{k-1} \} \subseteq [n] $. 
Similar to the proof of case (2), we have the $ k $ rows of the matrix $ D_4 $ are linearly independent, thus the rank of $ D_4 $ is $ k $.  
\end{enumerate}
This completes the proof. 
\end{proof}

Consequently, from \cref{th: AMDSh>0}, we have the following corollary for $h=0$.

\begin{corollary}\label{th: AMDSh=0}
Let $ k, n, h $ be integers with $ 3 \leqslant k < n-1 $. 
Then the code $ \mathcal{C} $ is AMDS if and only if the following three conditions hold simultaneously:
\begin{enumerate}[\upshape(1)]
\item For any subset $ M \subseteq [n] $ with $ | M | = k + 1 $, there exists a subset $ I \subset M $ with $ | I | = k $ such that 
\[ 
\eta^{-1} \ne - S_{1} (I) S_{k} (I).
\]
\item For any subset $ I \subseteq [n] $ with $ | I | = k $, there exists a subset $ J \subset I $ with $ | J | = k - 1 $ such that 
\[ 
S_{k-1} (J) + \delta - \delta \eta S_{k-1}(J) \big(S_1(J)^2 - S_2(J) \big) \ne 0. 
\]
\item One of the following conditions holds:
\begin{enumerate} [\upshape(a)]
\item There exists a subset $ I \subseteq [n] $ with $ |I| = k $ such that 
\[
\eta^{-1} = -S_1(I) S_k(I).
\]
\item There exists a subset $ J \subseteq [n] $ with $ |J| = k-1 $ such that $ S_{k-1}(J) = 0 $.
\item There exists a subset $ J \subseteq [n] $ with $ |J| = k-1 $ such that 
\[
S_{k-1}(J) + \delta - \delta \eta S_{k-1}(J) \big(S_1(J)^2 - S_2(J) \big) = 0. 
\] 
\item There exists a subset $ L \subseteq [n] $ with $ |L| = k-2 $ such that $ S_{k-2}(L) = 0 $. 
\end{enumerate}
\end{enumerate}   
\end{corollary}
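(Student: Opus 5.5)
The plan is to obtain \cref{th: AMDSh=0} by specializing \cref{th: AMDSh>0} to $h=0$. The hypotheses $3 \leqslant k < n-1$ and $0 \leqslant h \leqslant k-2$ hold trivially for $h=0$, so the three conditions of \cref{th: AMDSh>0} characterize AMDS-ness of $\mathcal{C}$ with $h=0$. It remains to rewrite each expression using the convention \eqref{eq:Sr}. The computation repeats what was done in passing from \cref{th: MDSh>0} to \cref{th: MDSh=0}, and I would present it that way.

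First I substitute $h=0$ in the quantity attached to a $k$-set $I$. The indices become $k-h+1=k+1$ and $k-h=k$. Since $|I|=k<k+1$, definition \eqref{eq:Sr} gives $S_{k+1}(I)=0$. Hence
\[
S_{k-h+1}(I)-S_1(I)S_{k-h}(I) = -S_1(I)S_k(I).
\]
This turns condition (1) and condition (3)(a) of \cref{th: AMDSh>0} into the displayed forms in the corollary.

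Next I treat $\Delta$ from \eqref{eq:Delta} for a $(k-1)$-set $J$ with $h=0$. Here $k-h+1=k+1>k-1$ and $k-h=k>k-1$, so $S_{k+1}(J)=S_k(J)=0$. Therefore
\[
\Delta = S_{k-1}(J)\big(S_1(J)^2-S_2(J)\big).
\]
Substituting this into $S_{k-h-1}(J)+\delta-\delta\eta\Delta$ with $k-h-1=k-1$ gives exactly the expressions in conditions (2) and (3)(c).

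The remaining conditions (3)(b) and (3)(d) read $S_{k-1}(J)=0$ and $S_{k-2}(L)=0$ directly. For $k\geqslant 3$ these indices do not exceed the set sizes, so no further simplification occurs. There is no genuine obstacle. The only point needing care is correctly applying the convention $S_r(E)=0$ for $r>|E|$, so that the vanishing terms are removed rather than misread.
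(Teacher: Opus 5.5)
Your proposal is correct and matches the paper's argument exactly. The paper obtains the corollary as the $h=0$ case of \cref{th: AMDSh>0}. The only work is your simplification via the convention $S_r(E)=0$ for $r>|E|$: $S_{k+1}(I)=0$ for $|I|=k$ and $S_k(J)=S_{k+1}(J)=0$ for $|J|=k-1$. This turns the $I$-expression into $-S_1(I)S_k(I)$ and $\Delta$ into $S_{k-1}(J)\big(S_1(J)^2-S_2(J)\big)$.
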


All examples below are verified using Magma programs.
\begin{example}\label{eg:3.7}
Let $ q = 5 $, $ n = 5 $, $ k = 3 $, $ h = 1 $ and $ \va = (0, 1, 2, 3, 4) $. 
If we take $ (\eta, \delta) = (1, 1) $, by \cref{th: AMDSh>0}, 
$ \mathcal{C} $ is an AMDS code with parameters $ [7, 3, 4] $. 
Magma verifies $11$ additional $ (\eta, \delta) $ pairs listed in \cref{tab:AMDS} that yield the same parameters.
\end{example}

\begin{example}\label{eg:3.8}
Let $ \gamma $ be a  primitive element of $ \mathbb{F}_8 $. 
Let $ n = 6 $, $ k = 4 $, $ h = 2 $ and $ \va = (0, 1, \gamma, \gamma^2, \gamma^3, \gamma^5) $. 
Taking $ (\eta, \delta) = (1, \gamma) $, \cref{th: AMDSh>0} gives $ \mathcal{C} $ is an AMDS code with parameters $ [8, 4, 4] $. 
Furthermore, there are $32$ additional such $(\eta, \delta)$ pairs listed in \cref{tab:AMDS} .
\end{example}

\begin{example}\label{eg:3.9}
Let $ q = 7 $, $ n = 5 $, $ k = 3 $, $ h = 0 $ and $ \va = (2, 3, 4, 5, 6) $. 
Put $ (\eta, \delta) = (5, 2) $. 
Then $ \mathcal{C} $ is a $ [7, 3, 4] $ AMDS code according to \cref{th: AMDSh=0}. 
Additionally, there are totally $25 $ such $(\eta, \delta)$  pairs listed in \cref{tab:AMDS}.
\end{example}

\begin{table}[h] 
\centering
\caption{AMDS codes obtained from \cref{th: AMDSh>0} and \cref{th: AMDSh=0}} 
\label{tab:AMDS}
\begin{tabular}{ccccccc}
\toprule
AMDS code  &  $ q $  &  $ h $  &  $ \va $  &  $ (\eta, \delta) $  &  \\
\midrule
$ [7,3,4] $  &  7  &  0  &  $ (2, 3, 4, 5, 6) $  &  \makecell[c]{(2, 1), (2, 2), (2, 3),\\ (2, 4), (2, 6), (3, 1),\\ (3, 2), (3, 3), (3, 4),\\ (3, 5) (4, 1), (4, 2),\\ (4, 4), (4, 5), (4, 6),\\ (5, 1), (5, 2), (5, 4),\\ (5, 5), (5, 6), (6, 1),\\ (6, 2), (6, 3), (6, 4), (6, 6)}  &  \cref{eg:3.9}   \\
\midrule
$ [7,3,4] $  &  5  &  1  &  $ (0, 1, 2, 3, 4) $  &  \makecell[c]{(1, 1), (1, 2), (1, 4),\\ (2, 2), (2, 3), (2, 4),\\ (3, 1), (3, 2), (3, 3),\\ (4, 1), (4, 3), (4, 4)}  &  \cref{eg:3.7}  \\
\midrule
$ [8,4,4] $  &  8  &  2  &  $ (0, 1, \gamma, \gamma^2, \gamma^3, \gamma^5) $  &  \makecell[c]{$ (1, \gamma) $, $ (1, \gamma^2) $, $ (1, \gamma^3) $,\\ $ (1, \gamma^4) $, $ (1, \gamma^5) $, $ (1, \gamma^6) $,\\ $ (\gamma^2, 1) $, $ (\gamma^2, \gamma) $, $ (\gamma^2, \gamma^2) $,\\ $ (\gamma^2, \gamma^3) $, $ (\gamma^2, \gamma^5) $, $ (\gamma^2, \gamma^6) $,\\ $ (\gamma^3, 1) $, $ (\gamma^3, \gamma) $, $ (\gamma^3, \gamma^3) $,\\ $ (\gamma^3, \gamma^5) $, $ (\gamma^4, 1) $, $ (\gamma^4, \gamma^2) $,\\ $ (\gamma^4, \gamma^3) $, $ (\gamma^4, \gamma^5) $, $ (\gamma^4, \gamma^6) $,\\ $ (\gamma^5, 1) $, $ (\gamma^5, \gamma) $, $ (\gamma^5, \gamma^2) $,\\ $ (\gamma^5, \gamma^4) $, $ (\gamma^5, \gamma^5) $, $ (\gamma^5, \gamma^6) $,\\ $ (\gamma^6, 1) $, $ (\gamma^6, \gamma) $, $ (\gamma^6, \gamma^2) $,\\ $ (\gamma^6, \gamma^3) $, $ (\gamma^6, \gamma^4) $, $ (\gamma^6, \gamma^6) $}  &  \cref{eg:3.8}  \\
\bottomrule
\end{tabular}
\end{table}


\section{The covering radii and deep holes of the codes}\label{sec5} 

This section investigates the covering radii and deep holes of linear codes, including a general result for AMDS codes. Note that the results for MDS codes are presented in \cref{lem: coveringradius} and \cref{lem: deephole2}.  We further apply them to characterize the covering radii and deep holes of the code $ \mathcal{C} $ defined in \cref{sec3}. Assume $ \vv = \boldsymbol{1} $ without loss of generality.

First, we present the covering radius and related deep hole of an AMDS code.
\begin{theorem}\label{lem: AMDS-CR}
Let $ G $ be a generator matrix of an $ [n, k] $ AMDS code $ C $. 
Then the covering radius of $ C $ is $ \rho(C) = n - k $ if and only if there exists a vector $ \vx \in \fqn $ such that the $ (k+1) \times n $ matrix $ \big( \frac{G}{\vx} \big) $ generates an $ [n, k+1] $ MDS code.
Moreover, $ \vx \in \fqn $ is a deep hole of $ C $ if and only if the $ (k+1) \times n $ matrix $ \big( \frac{G}{\vx} \big) $ generates an $ [n, k+1] $ MDS code.
\end{theorem}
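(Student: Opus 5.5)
The plan is to reduce everything to the single inequality $\min_{\vc\in C} d(\vx,\vc)\leqslant n-k$ and to characterize exactly when equality holds. The tools are the Redundancy Bound (\cref{lem: RedundancyBound}) and the MDS criterion via minimum distance, mirroring the argument behind \cref{lem: coveringradius} and \cref{lem: deephole2}.

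First, for any $\vx\in\fqn\setminus C$, set $C'=\langle C,\vx\rangle$. This is an $[n,k+1]$ code generated by $\big(\frac{G}{\vx}\big)$. For $\vx\in C$ the matrix has rank $k$, so it cannot generate an $[n,k+1]$ code, and $d(\vx,C)=0<n-k$; we discard this case. I will prove the identity
\[
d(\vx,C)=\min\{wt(\vc)\colon \vc\in C'\setminus C\}.
\]
Indeed, the elements of $C'\setminus C$ are exactly $\lambda\vx+\vc$ with $\lambda\neq0$ and $\vc\in C$. Their weights are $wt(\vx+\lambda^{-1}\vc)=d(\vx,-\lambda^{-1}\vc)$, and these distances range over all distances from $\vx$ to codewords.

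Next, I compare this quantity with $d(C')$. By the Singleton bound, $d(C')\leqslant n-k$, so $d(\vx,C)\leqslant n-k$, consistent with \cref{lem: RedundancyBound}. The key point is that $d(C')=\min\{d(C),\,d(\vx,C)\}$. Since $C$ is AMDS, $d(C)=n-k$, hence
\[
d(C')=\min\{n-k,\,d(\vx,C)\}=d(\vx,C).
\]
Therefore $d(\vx,C)=n-k$ holds if and only if $d(C')=n-k=n-(k+1)+1$, that is, if and only if $\big(\frac{G}{\vx}\big)$ generates an $[n,k+1]$ MDS code.

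Both claims then follow. Since $d(\vx,C)\leqslant n-k$ for every $\vx$, the equality $\rho(C)=n-k$ holds if and only if some $\vx$ attains $d(\vx,C)=n-k$, which is the MDS condition above. For the deep hole statement, note that whenever such an $\vx$ exists we have $\rho(C)=n-k$, so $\vx$ is a deep hole exactly when $d(\vx,C)=n-k$, again the MDS condition. If no such $\vx$ exists, then $\rho(C)<n-k$ and no extended matrix is MDS, so the equivalence should be read under the hypothesis $\rho(C)=n-k$, as in \cref{lem: deephole2}. I will state this caveat explicitly.

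The main subtlety is where the AMDS hypothesis enters: it guarantees $d(C)\geqslant n-k$, so codewords of $C$ itself never lower $d(C')$ below $n-k$. Beyond that, the remaining steps are a routine check of the coset-weight identity.
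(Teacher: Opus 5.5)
Your argument is correct and is essentially the paper's proof (definition of $\rho$, \cref{lem: SupercodeLemma}, \cref{lem: RedundancyBound}) written out in full: your identity $d(\vx,C)=\min\{wt(\vc)\colon \vc\in C'\setminus C\}$ is the sharp form of the Supercode Lemma that the converse direction needs. One justification needs fixing: $d(\vx,C)\leqslant n-k$ must be taken directly from \cref{lem: RedundancyBound}, not inferred from $d(C')\leqslant n-k$, which already holds because $d(C)=n-k$. Your caveat is also justified: the deep-hole clause needs the hypothesis $\rho(C)=n-k$ of \cref{lem: deephole2}. For example, the binary $[4,1,3]$ AMDS code $\langle(1,1,1,0)\rangle$ has $\rho=2<3$ and deep hole $(1,0,0,1)$, yet no binary $[4,2,3]$ MDS code exists.
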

\begin{proof}
The conclusion follows from \cref{def: coveringradius}, \cref{lem: SupercodeLemma}, and \cref{lem: RedundancyBound}.
\end{proof}

In the following theorem, we determine the covering radii and deep holes of the code $ \mathcal{C} $ when $ \mathcal{C} $ is MDS or AMDS by applying \cref{lem: coveringradius}, \cref{lem: deephole2}, or \cref{lem: AMDS-CR}. For simplicity, we define the notations as follows:
\begin{align*}
\Theta_1 & = 1-\eta \big(S_{k-h+1}(I) - S_1(I)S_{k-h}(I) \big), \\
\Theta_2 & = S_1(I) + \eta \big(S_2(I)S_{k-h}(I) - S_1(I)S_{k-h+1}(I) \big), \\
\Theta_3 & = S_{k-h-1}(J) + \delta - \delta \eta \Delta, \\
\Theta_4 & = S_1(J)S_{k-h-1}(J) - S_{k-h}(J),
\end{align*}
for any subset $ I $, $ J \subseteq [n] $ with $ | I | = k $ and $ | J | = k - 1 $, where $S_j(\cdot)$ and $ \Delta $ are defined as in \eqref{eq:Sr} and \eqref{eq:Delta}, respectively.

\begin{theorem}\label{th:crdp}
Let $ a, b \in \fq $ and let $ k, n, h $ be integers with $ 3 \leqslant k < k + 1 < n \leqslant q $ and $ 0 \leqslant h \leqslant k-2 $.
Assume $ \mathcal{C} $ is MDS or AMDS and the following conditions hold simultaneously: 
\begin{enumerate}[\upshape 1)]
\item $ \eta^{-1} \ne S_{k-h+1}(M) $ for any subset $ M \subseteq [n] $ with $ | M | = k+1 $.
\item $ S_{k-h}(I) + a \Theta_1 \ne 0 $ for any subset $ I \subseteq [n] $ with $ | I | = k $.
\item $ S_{k-h}(I) + b\Theta_1 + \delta \Theta_2 \ne 0 $ for any subset $ I \subseteq [n] $ with $ | I | = k $.
\item $ a \Theta_3 - b S_{k-h-1}(J) - \delta \Theta_4 \ne 0 $ for any subset $ J \subseteq [n] $ with $ | J | = k-1 $.
\end{enumerate}
Then the covering radius of $\mathcal{C}$ is $ \rho(\mathcal{C}) = n - k + 2 $ and the vector $ \vx = (\alpha_1^k, \dots, \alpha_n^k, a, b) $ is a deep hole of $ \mathcal{C} $.
\end{theorem}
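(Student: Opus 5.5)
The plan is to reduce everything to an MDS criterion for an enlarged generator matrix. Let $G'$ be the $(k+1)\times(n+2)$ matrix obtained by appending the row $\vx = (\alpha_1^k,\dots,\alpha_n^k,a,b)$ to $G$. If $\mathcal{C}$ is MDS, \cref{lem: coveringradius} and \cref{lem: deephole2} apply. If $\mathcal{C}$ is AMDS, \cref{lem: AMDS-CR} applies. In either case, $\rho(\mathcal{C}) = (n+2)-k$ and $\vx$ is a deep hole as soon as $G'$ generates an $[n+2,k+1]$ MDS code. By \cref{lem: MDS}, this means every $(k+1)\times(k+1)$ submatrix of $G'$ is nonsingular. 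So the whole proof amounts to showing that conditions 1)--4) are exactly the nonvanishing of these minors, sorted by how many of the last two columns are chosen.

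We order the rows of $G'$ as the powers $1,\alpha,\dots,\alpha^{k}$, with the twisted row $\alpha^h+\eta\alpha^{k+1}$ occupying the $h$-th slot. There are four cases.

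\emph{(i) All $k+1$ columns come from the first $n$, indexed by $M$.} The determinant is a full Vandermonde determinant plus $\pm\eta$ times the generalized Vandermonde with row set $\{0,\dots,k+1\}\setminus\{h\}$. The standard Schur-function identity, applied as in the computation of $A$ via $A(x,y)$ in \cref{th: MDSh>0} but now with one extra variable, gives $\prod(\alpha_t-\alpha_s)\bigl(1-\eta S_{k-h+1}(M)\bigr)$ up to sign. This yields condition 1).

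\emph{(ii) $k$ columns indexed by $I$ plus column $n+1$.} We expand along the last column, whose only nonzero entry is $1$ in the twisted row. The result is $\pm a\det(B_1)$ plus the minor with row $h$ removed. That minor is a generalized Vandermonde with rows $\{0,\dots,k\}\setminus\{h\}$, equal to $\pm\prod(\alpha_t-\alpha_s)S_{k-h}(I)$. Here $\det(B_1)$ is taken from the proof of \cref{th: MDSh>0} and equals $\prod(\alpha_t-\alpha_s)\,\Theta_1$. Checking the signs should give condition 2).

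\emph{(iii) $k$ columns indexed by $I$ plus column $n+2$.} We expand along this column, which has entries $1$ in row $h$, $\delta$ in row $k-1$, and $b$ in the last row. This produces three minors. The one for row $k$ is $\det(B_1)$. The one for row $h$ is again $\prod(\alpha_t-\alpha_s)S_{k-h}(I)$. The one for row $k-1$ deletes row $k-1$ while keeping the twist, so it is a Vandermonde with exponents $\{0,\dots,k\}\setminus\{k-1\}$, equal to $\prod(\alpha_t-\alpha_s)S_1(I)$ up to sign, plus $\eta$ times a generalized Vandermonde with exponents $\{0,\dots,k+1\}\setminus\{h,k-1\}$. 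I will evaluate the latter by the same trick as $A(x,y)$: take a Vandermonde in the $\alpha_i$ and two extra variables $x,y$, and compare the coefficient of $x^hy^{k-1}$. The coefficient is a $2\times2$ Jacobi--Trudi-type expression, which I expect to be $S_2(I)S_{k-h}(I)-S_1(I)S_{k-h+1}(I)$. Assembling the three terms should give exactly $\Theta_2$ and condition 3).

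\emph{(iv) $k-1$ columns indexed by $J$ plus both extra columns.} Subtracting column $n+1$ from column $n+2$ leaves $\delta e_{k-1}+(b-a)e_k$ in column $n+2$. Then column $n+1$ becomes $e_h+a e_k$. Expanding the two special columns reduces the determinant to $(k-1)\times(k-1)$ minors of the first $k-1$ columns with two of the rows $\{h,k-1,k\}$ deleted. These are $B_2$-, $B_3$-type and Vandermonde-type minors, already computed in \cref{th: MDSh>0}, together with one more minor of the same kind as in (iii), with rows $h$ and $k-1$ deleted. That last one equals $\Theta_4$. Collecting terms should give $a\Theta_3-bS_{k-h-1}(J)-\delta\Theta_4$, which is condition 4).

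The main obstacle is the sign bookkeeping and the evaluation of the doubly-deleted generalized Vandermonde determinants, namely the $\eta$-parts in (iii) and (iv). To handle these I will apply Jacobi--Trudi in its dual form, in terms of elementary symmetric functions $e_r=(-1)^rS_r$. This expresses each such determinant divided by the Vandermonde as a $2\times2$ or $3\times3$ determinant in the $S_r$. I will then verify the resulting combination against $\Theta_2$, $\Theta_3$, $\Theta_4$ and $\Delta$ as defined.
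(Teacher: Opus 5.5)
Your proposal is correct and is the argument behind the paper's one-line proof. You append $\vx$ to $G$, invoke \cref{lem: coveringradius} and \cref{lem: deephole2} (MDS case) or \cref{lem: AMDS-CR} (AMDS case), and then use \cref{lem: MDS} to reduce to the four classes of $(k+1)$-minors of $\big(\frac{G}{\vx}\big)$. Your case computations, which the paper leaves implicit, check out. Up to the Vandermonde factor the four minors are $1-\eta S_{k-h+1}(M)$, $S_{k-h}(I)+a\Theta_1$, $S_{k-h}(I)+b\Theta_1+\delta\Theta_2$ and $-(a\Theta_3-bS_{k-h-1}(J)-\delta\Theta_4)$. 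The last one uses that the minor with rows $k-1,k$ deleted is the $\delta$-cofactor $1-\eta\Delta$ appearing in $\det(B_3)$.
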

\begin{proof}
The result follows by combining
\cref{lem: coveringradius}, \cref{lem: deephole2}, \cref{lem: MDS}, \cref{th: MDSh>0}, and \cref{lem: AMDS-CR}. 
\end{proof}

Taking $ a = b = 0 $ in \cref{th:crdp} leads to the following results.
\begin{corollary}\label{cor:a=b=0}
Let $ k, n, h $ be integers with $ 3 \leqslant k < k + 1 < n \leqslant q $ and $ 0 \leqslant h \leqslant k-2 $.
Assume that $ \mathcal{C} $ is MDS or AMDS and the following conditions hold simultaneously: 
\begin{enumerate}[\upshape 1)]
\item $ \eta^{-1} \ne S_{k-h+1}(M) $ for any subset $ M \subseteq [n] $ with $ | M | = k+1 $.
\item $ S_{k-h}(I) \ne 0 $ for any subset $ I \subseteq [n] $ with $ | I | = k $.
\item $ S_{k-h}(I) + \delta \Big(S_1(I) + \eta \big(S_{k-h}(I)S_2(I) - S_{k-h+1}(I)S_1(I) \big)\Big) \ne 0 $ for any subset $ I \subseteq [n] $ with $ | I | = k $.
\item $ S_{k-h-1}(J)S_1(J) - S_{k-h}(J) \ne 0 $ for any subset $ J \subseteq [n] $ with $ | J | = k-1 $.
\end{enumerate}
Then the covering radius of $\mathcal{C}$ is $ \rho(\mathcal{C}) = n - k + 2 $ and the vector $ \vx = (\alpha_1^k, \dots, \alpha_n^k, 0, 0) $ is a deep hole of $ \mathcal{C} $.
\end{corollary}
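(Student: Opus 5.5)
This corollary is the specialization $a=b=0$ of \cref{th:crdp}, so the plan is to substitute $a=b=0$ into the four hypotheses of \cref{th:crdp} and check that they become exactly the four conditions listed here. No new determinant computation is needed; everything reduces to algebraic simplification of $\Theta_1,\dots,\Theta_4$.

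Condition 1) involves neither $a$ nor $b$, so it carries over verbatim. In condition 2) of \cref{th:crdp}, the term $a\Theta_1$ vanishes and we are left with $S_{k-h}(I)\ne 0$. In condition 3), setting $b=0$ gives $S_{k-h}(I)+\delta\Theta_2\ne 0$. Expanding $\Theta_2 = S_1(I)+\eta\big(S_2(I)S_{k-h}(I)-S_1(I)S_{k-h+1}(I)\big)$ and reordering the commuting products $S_2(I)S_{k-h}(I)=S_{k-h}(I)S_2(I)$ and $S_1(I)S_{k-h+1}(I)=S_{k-h+1}(I)S_1(I)$ yields exactly the displayed condition 3). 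In condition 4), with $a=b=0$ only $-\delta\Theta_4\ne 0$ remains. Since $\delta\in\fqstar$, this is equivalent to $\Theta_4=S_1(J)S_{k-h-1}(J)-S_{k-h}(J)\ne 0$, which is the stated condition 4).

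The MDS-or-AMDS assumption on $\mathcal{C}$ is shared by both statements. Hence all hypotheses of \cref{th:crdp} hold with $a=b=0$, and \cref{th:crdp} gives $\rho(\mathcal{C})=n-k+2$ and that $(\alpha_1^k,\dots,\alpha_n^k,0,0)$ is a deep hole.

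The only point requiring any care is the use of $\delta\neq 0$ in condition 4), which lets us divide out $\delta$. This comes from the standing assumption $\delta\in\fqstar$ in \cref{def: C}, so I do not anticipate a genuine obstacle.
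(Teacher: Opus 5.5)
Your proposal is correct and takes the same approach as the paper, which obtains this corollary simply by setting $a=b=0$ in \cref{th:crdp}. Your condition-by-condition check, including using $\delta\neq 0$ to cancel the factor $-\delta$ in the fourth condition, is exactly the verification that this specialization requires.
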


Taking $ h = 0 $ in \cref{th:crdp} yields the following corollary.
\begin{corollary}\label{cor:h=0}
Let $ a, b \in \fq $. Let $ k, n, h $ be integers with $ 3 \leqslant k < k + 1 < n \leqslant q $.
Assume that $ \mathcal{C} $ is MDS or AMDS and the following conditions hold simultaneously: 
\begin{enumerate}[\upshape 1)]
\item $ \eta^{-1} \ne S_{k+1}(M) $ for any subset $ M \subseteq [n] $ with $ | M | = k+1 $.
\item $ S_{k}(I) + a \big(1+\eta S_{k}(I)S_{1}(I) \big) \ne 0 $ for any subset $ I \subseteq [n] $ with $ | I | = k $.
\item $ S_{k}(I) + b \big(1+\eta S_{k}(I)S_{1}(I) \big) + \delta \big(S_1(I) + \eta S_2(I)S_{k}(I) \big) \ne 0 $ for any subset $ I \subseteq [n] $ with $ | I | = k $.
\item $ a \Big(S_{k-1}(J)+\delta-\delta \eta S_{k-1}(J) \big(S_1(J)^2-S_2(J) \big) \Big) - b S_{k-1}(J) - \delta S_{k-1}(J)S_1(J) \ne 0 $ for any subset $ J \subseteq [n] $ with $ | J | = k-1 $.
\end{enumerate}
Then the covering radius of $\mathcal{C}$ is $ \rho(\mathcal{C}) = n - k +2 $ and the vector $ \vx = (\alpha_1^k, \dots, \alpha_n^k, a, b) $ is a deep hole of $ \mathcal{C} $.
\end{corollary}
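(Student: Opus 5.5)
This is the specialization $h=0$ of \cref{th:crdp}, so the plan is simply to substitute $h=0$ into conditions 1)--4) there and simplify each expression with the conventions in \eqref{eq:Sr}. The hypotheses on $k,n$ are those of \cref{th:crdp} with $h=0$, which satisfies $0\leqslant h\leqslant k-2$. The MDS/AMDS assumption on $\mathcal{C}$ is inherited verbatim. By \cref{th:crdp} we then get $\rho(\mathcal{C})=n-k+2$ and that $\vx=(\alpha_1^k,\dots,\alpha_n^k,a,b)$ is a deep hole.

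The key simplifications come from the vanishing rule $S_r(E)=0$ for $r>|E|$. With $h=0$, condition 1) becomes $\eta^{-1}\ne S_{k+1}(M)$ with $|M|=k+1$. For $|I|=k$ we have $S_{k-h+1}(I)=S_{k+1}(I)=0$, so
\[
\Theta_1=1+\eta S_1(I)S_k(I),\qquad \Theta_2=S_1(I)+\eta S_2(I)S_k(I).
\]
This turns conditions 2) and 3) into the stated forms. For $|J|=k-1$ we have $S_{k-h}(J)=S_k(J)=0$ and $S_{k-h+1}(J)=0$. Hence $\Delta=S_{k-1}(J)\big(S_1(J)^2-S_2(J)\big)$, $\Theta_3=S_{k-1}(J)+\delta-\delta\eta S_{k-1}(J)\big(S_1(J)^2-S_2(J)\big)$ and $\Theta_4=S_1(J)S_{k-1}(J)$. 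Substituting these into condition 4) of \cref{th:crdp} gives exactly condition 4) of the corollary.

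The only point needing care, and the mild obstacle, is checking that each index range is handled correctly by \eqref{eq:Sr}. In particular, $S_{k+1}(I)$ and $S_k(J)$ vanish because the index exceeds the set size, while $S_{k+1}(M)$ does not, since $|M|=k+1$. No further computation is needed once this is verified.
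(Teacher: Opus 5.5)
Your proposal is correct and is exactly the paper's argument: the paper obtains this corollary by setting $h=0$ in \cref{th:crdp}. You correctly use $S_{k+1}(I)=0$ for $|I|=k$ and $S_k(J)=S_{k+1}(J)=0$ for $|J|=k-1$ to reduce $\Theta_1,\dots,\Theta_4$ and $\Delta$ to the stated forms, and you correctly note that $S_{k+1}(M)$ with $|M|=k+1$ does not vanish.
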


We give some examples for the above results, which are all verified using Magma programs.

\begin{example}
Let $ q = 13 $, $ n = 6 $, $ k = 3 $, $ h = 1 $, and $ \va = (1, 2, 3, 7, 8, 9) $. 
Put $ (a, b, \delta, \eta) = (2, 7, 2, 9) $ and $ \vx = (1, 8, 1, 5, 5, 1, 2, 7) $. 
By \cref{th:crdp}, both the non-GRS type $ [8,3] $ code $ \mathcal{C} $ and the $ [8,4] $ code generated by the matrix $ \big( \frac{G}{\vx} \big) $ are MDS over $ \f_{13} $, the covering radius of $ \mathcal{C} $ is $ \rho(\mathcal{C}) = 5 $ and $ \vx $ is a deep hole of $ \mathcal{C} $.
\end{example}

\begin{example}
Let $ q = 13 $, $ n = 6 $, $ k = 3 $, $ h = 0 $, and $ \va = (2, 3, 6, 8, 9, 10) $. 
Put $ (a, b, \delta, \eta) = (0, 1, 2, 8) $ and $ \vx = (8, 1, 8, 5, 1, 12, 0, 1) $. 
By \cref{cor:h=0}, both the non-GRS type $ [8,3] $ code $ \mathcal{C} $ and the $ [8,4] $ code generated by the matrix $ \big( \frac{G}{\vx} \big) $ are MDS over $ \f_{13} $, the covering radius of $ \mathcal{C} $ is $ \rho(\mathcal{C}) = 5 $ and $ \vx $ is a deep hole of $ \mathcal{C} $.
\end{example}

\begin{example}
Let $ q = 7 $, $ n = 5 $, $ k = 3 $, $ h = 1 $, and $ \va = (1, 2, 4, 5, 6) $. 
Put $ (a, b, \delta, \eta) = (6, 1, 3, 2) $ and $ \vx = (1, 1, 1, 6, 6, 6, 1) $. 
By \cref{th:crdp}, the $ [7,3] $ code $ \mathcal{C} $ is AMDS and the $ [7,4] $ code generated by the matrix $ \big( \frac{G}{\vx} \big) $ is MDS over $ \f_{7} $, the covering radius of $ \mathcal{C} $ is $ \rho(\mathcal{C}) = 4 $ and $ \vx $ is a deep hole of $ \mathcal{C} $.
\end{example}
\begin{example}
Let $\gamma$ be a primitive element of $ \f_8 $, $ n = 7 $, $ k = 5 $, $ h = 0 $, and $ \va = (1, \gamma, \gamma^3, \gamma^4, \gamma^5, \gamma^6, 0) $. 
Put $ (a, b, \delta, \eta) = (\gamma^3, \gamma^2, 1, \gamma^5) $ and $ \vx = (1, \gamma^5, \gamma, \gamma^6, \gamma^4, \gamma^2, 0, \gamma^3, \gamma^2) $. 
By \cref{th:crdp}, the $ [9,5] $ code $ \mathcal{C} $ is AMDS and the $ [9,6] $ code generated by the matrix $ \big( \frac{G}{\vx} \big) $ is MDS over $ \f_{8} $, the covering radius of $ \mathcal{C} $ is $ \rho(\mathcal{C}) = 4 $ and $ \vx $ is a deep hole of $ \mathcal{C} $.
\end{example}
 
\begin{remark}\label{thm: crdh-dual(C1)}
The covering radius and deep holes of $ C_1^{\perp} $ can also be characterized. 
In fact, similar to the proof of \cref{th: MDSh>0}, we know that code $ C_1 $ is MDS if and only if Conditions $ (1) $ and $ (2) $ in \cref{th: MDSh>0} are satisfied. 
\cref{th: MDSh>0} indicates when $ \mathcal{C} $ becomes MDS.
So, by \cref{lem: deephole}, if $ \mathcal{C} $ is MDS,
then the covering radius $ \rho(C_1^{\perp}) = k $ and the vector $ \ve $ in \cref{thm: C=barC_1} is a deep hole of $C_1^{\perp}$ for integers $ k, n, h $ satisfying $ 3 \leqslant k < k + 1 < n \leqslant q $ and $ 0 \leqslant h \leqslant k-2 $.   
\end{remark}


\section{Conclusion}\label{sec6}

In this paper, we constructed a class of extended TGRS codes and investigated some properties of these codes. The main results of this paper are as follows.
\begin{enumerate}
\item We showed that $ \mathcal{C} $ is an extended code of the type $ \overline{C}(\ve) $ in \eqref{eq: bar_C} in \cref{thm: C=barC_1}. 
\item We proved that $ C_1 $ is non-GRS type based on the dimension of the Schur product in \cref{thm: C_1noGRS}, and using this result, we further verified that the extended code $ \mathcal{C} $ of $ C_1 $ is also  non-GRS type in \cref{thm: CnonGRS}.
\item The sufficient and necessary conditions for $ \mathcal{C} $ being non-GRS type MDS were given in \cref{th: MDSh>0} and \cref{th: MDSh=0}.
\item The sufficient and necessary conditions for $ \mathcal{C} $ being an AMDS code were provided in \cref{th: AMDSh>0} and \cref{th: AMDSh=0}.
\item In \cref{lem: AMDS-CR}, for an $[n,k]$ AMDS code, we determined the sufficient and necessary conditions for its covering radius reaching $n-k$. As a consequence, the covering radii and deep holes of  $ \mathcal{C} $ were represented in \cref{th:crdp}, \cref{cor:a=b=0} and \cref{cor:h=0}, whenever $ \mathcal{C} $ is MDS or AMDS.
\end{enumerate}


\section*{Acknowledgements}

This work was supported  by the Natural Science Foundation of Shandong Province under Grants ZR2025MS65 and ZR2023MA042.



\end{document}